\newtheorem{theorem}{Theorem}
\newtheorem{lemma}{Lemma}
\newcommand{\MCI}{{\textsc{MCI}}\xspace}
\newcommand{\MCPP}{{\textsc{MCP}}\xspace}
\newcommand{\MSC}{{\textsc{MC}}\xspace}
\newcommand{\DkS}{\textsc{DkS}\xspace}
\newcommand{\Greedy}{\textsc{Greedy}}
\newcommand{\Random}{\textsc{Random }}
\title{Coverage Centrality Maximization in Undirected Networks}
\author{Gianlorenzo D'Angelo\\
Gran Sasso Science Institute \\
L'Aquila, Italy\\
gianlorenzo.dangelo@gssi.it\\
\And
Martin Olsen\\
Department of Business Development \\
 and Technology\\
 Aarhus University, Denmark\\
martino@btech.au.dk
\And
Lorenzo Severini\\
ISI Foundation, Turin, Italy \\
lorenzo.severini@isi.it
}
\begin{document}

\maketitle

\begin{abstract}
 Centrality metrics are among the main tools in social network analysis. Being central for a user of a network leads to several benefits to the user: central users are highly influential and play key roles within the network.
 Therefore, the optimization problem of increasing the centrality of a network user recently received considerable attention. Given a network and a target user $v$, the centrality maximization problem consists in creating $k$ new links incident to $v$ in such a way that the centrality of $v$ is maximized, according to some centrality metric.
 Most of the algorithms proposed in the literature are based on showing that a given centrality metric is monotone and submodular with respect to link addition. However, this property does not hold for several shortest-path based centrality metrics if the links are undirected.
 
 In this paper we study the centrality maximization problem in undirected networks for one of the most important shortest-path based centrality measures, the coverage centrality. We provide several hardness and approximation results. We first show that the problem cannot be approximated within a factor greater than $1-1/e$, unless $P=NP$, and, under the stronger gap-ETH hypothesis, the problem cannot be approximated within a factor better than $1/n^{o(1)}$, where $n$ is the number of users. We then propose two greedy approximation algorithms, and show that, by suitably combining them, we can guarantee an approximation factor of $\Omega(1/\sqrt{n})$. We experimentally compare the solutions provided by our approximation algorithm with optimal solutions computed by means of an exact IP formulation. We show that our algorithm produces solutions that are very close to the optimum. 
\end{abstract}

\section{Introduction}
\label{sec:introduction}
\noindent Determining what are the most important nodes in a network is one of the main goals of network analysis~\cite{N10}. Several so-called \emph{centrality metrics} have been proposed in the literature to try to quantitatively measure the importance of a node according to network properties like: distances between nodes (e.g. closeness or harmonic centrality), number of shortest paths passing through a node (e.g. betweenness or coverage centrality), or on spectral graph properties (e.g. PageRank or information centrality). 

It has been experimentally observed that nodes with high centrality values play key roles within the network. For example, closeness centrality is significantly correlated with the influence of users in a social network~\cite{CDSV16,MSHM12}, while shortest-path-based metrics are correlated with the number of passengers passing trough an airport in transportation networks~\cite{IETB12,MMPR09}.
The coverage centrality of a node $v$ is the number of distinct pairs of nodes for which a shortest path passes through $v$. Nodes with high coverage centrality are pivotal to the communication between many other nodes of the network.

Generally speaking, centrality metrics are positively correlated with many desirable properties of nodes, therefore, there has been a recent considerable interest on finding strategies to increase the centrality value of a given node in order to maximize the benefits for the node itself. In this paper we focus on the most used strategy which is that of modifying the network by adding a limited number of new edges incident to the node itself. In detail we study the following optimization problem: given a graph $G$, a node $v$ of $G$, and an integer $k$, find $k$ edges to be added incident to $v$ maximising the centrality value of $v$ in the graph $G$ augmented with these edges. The problem is usually referred to as the \emph{centrality maximization problem} and it can be instantiated by using different centrality metrics such as: PageRank~\cite{AN06,OV14}, eccentricity~\cite{DZ10,PBZ13}, coverage centrality~\cite{IETB12,MSSBS18}, betweenness~\cite{BCDMSV18,DSV15}, information centrality~\cite{SYZ18}, closeness and harmonic centrality~\cite{CDSV16}. The centrality maximization problem is in general $NP$-hard but in all the mentioned cases the authors were able to devise algorithms ensuring a constant approximation factor. In Table~\ref{tbl:bounds} we list the bounds on the approximation ratio reported in these references.

\begin{table}
{\renewcommand{\arraystretch}{1.2}
\begin{tabular}{@{}l|l|@{}c@{}|@{}c@{}}
Centrality& Graph & Approximation &Hardness of\\
metric     & type  & Algorithm & approximation\\
\hline
\hline
\multirow{2}{*}{Harmonic} & Undir. & $1-\frac{1}{e}$&$1-\frac{1}{15e}$  \\
\cline{2-4}
                          & Dir. & $1-\frac{1}{e}$&$1-\frac{1}{3e}$ \\
\hline
\multirow{2}{*}{Betweenness} & Undir. & OPEN&$1-\frac{1}{2e}$ \\
\cline{2-4}
                             & Dir. & $1-\frac{1}{e}$&$1-\frac{1}{2e}$ \\
\hline
Eccentricity                & undir. & $2+\frac{1}{OPT}$& $\frac{3}{2}$\\
\hline
PageRank                & Dir. & $\left(1-\alpha^2\right)\left(1-\frac{1}{e}\right)$&NO FPTAS \\
\hline
Information & Undir. & $1-\frac{1}{e}$&OPEN \\
\hline
Constr. coverage\!\! & Undir. & $1-\frac{1}{e}$&OPEN  \\
\hline
\multirow{3}{*}{Coverage} & \multirow{2}{*}{\textbf{Undir.}} & \multirow{2}{*}{$\mathbf{\Omega(1/\sqrt{n})}$}&$\mathbf{1-\frac{1}{e}}$  \\
&&&$\mathbf{1/n^{o(1)}}$\\
\cline{2-4}
                          & DAGs & $1-\frac{1}{e}$&OPEN
\end{tabular}
}
\caption{Summary of approximation bounds for the centrality maximization problem. The ``Constr. coverage'' row refers to the version of the coverage centrality maximization problem with the additional constraint that a pair of nodes can be covered by at most one edge. The results in this paper are marked in bold, the second hardness bound is under the Gap-ETH condition.}
\label{tbl:bounds}
\end{table}

Most of these approximation algorithms are based on a fundamental result on submodular optimization due to Nemhauser et al.~\cite{NWF78}. Given a monotone submodular set function $f$~\footnote{A set function $f$ is is submodular if for any pair of sets $A\subseteq B$ and any element $e\not\in B$, $f(A\cup\{e\})-f(A) \geq f(B\cup\{e\})-f(B)$.} and an integer $k$ the problem of finding a set $S$ with $|S|\leq k$ that maximises $f(S)$ is $NP$-hard and hard to approximate within a factor greater than $1-1/e$, unless $P=NP$~\cite{F98}. However, the greedy algorithm that starts with the empty set and repeatedly adds an element that gives the maximal marginal gain of $f$ guarantees the optimal approximation factor of $1-1/e$~\cite{NWF78}. Many of the approximation algorithms for the centrality maximization problem are based on the fact that the value for node $v$ of the considered centrality metric is monotone and submodular with respect to the addition of edges incident to $v$.

 \begin{figure}[t]
   \centering
  \includegraphics[scale=0.8]{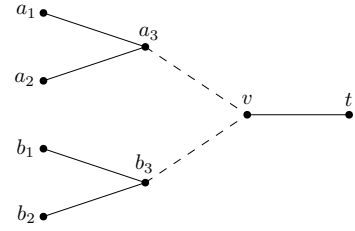}
\caption{Adding edge $\{a_3,v\}$ increases the coverage centrality of $v$ by 3, since nodes $a_i$ will have a shortest path to $t$ passing through $v$. Similarly, adding edge $\{b_3,v\}$ increases the coverage centrality of $v$ by 3. However, adding both edges $\{a_3,v\}$ and $\{b_3,v\}$ will increase the coverage centrality of $v$ by 15, since besides the shortest paths between nodes $a_i$ and $t$ and those between nodes $b_i$ and $t$ we need to take into account the 9 shortest paths between nodes $a_i$ and nodes $b_i$ that pass through $v$.}
\label{fig:example}
\end{figure}

Unfortunately, not all the centrality metrics exhibit a submodular trend. Indeed, it has been shown that in \emph{undirected graphs} some shortest-path based metrics are not submodular and, furthermore, the greedy algorithm exhibits an arbitrarily small approximation factor~\cite{DSV15}.
For example, Figure~\ref{fig:example} shows an undirected graph in which the increment in coverage centrality is not submodular with respect to edge addition.
This is in contrast with the results for the same centrality metrics on \emph{directed graphs}, where, e.g., betweenness and coverage centrality are monotone and submodular~\cite{BCDMSV18,IETB12}. 
Not being submodular makes things much harder and so far finding an approximation algorithm for the centrality maximization problem on shortest-paths based metrics has been left as an open problem~\cite{BCDMSV18,DSV15}.
To overcome this issue, Medya et al.~\cite{MSSBS18} consider the coverage centrality maximization problem with the additional artificial constraint that a pair of nodes can be covered by at most one edge. This constraint on the solution avoids the cases in Figure~\ref{fig:example} and makes the objective function submodular. However, it does not consider solutions that cover pairs of nodes with pairs of edges, hence it looks for sub-optimal solutions to the general problem.

In this paper we give the first results on the general coverage centrality maximization problem in undirected graphs. In the remainder of the paper we will refer to this problem as the \emph{Maximum Coverage Improvement} (\MCI) problem. Our results can be summarized as follows (see also Table~\ref{tbl:bounds}).
\begin{itemize}
 \item \MCI cannot be approximated within a factor greater that $1-1/e$, unless $P=NP$.
 \item \MCI is at least as hard to approximate as the well-known Densest-$k$-subgraph problem and hence cannot be approximated within any constant, if the Unique Games with Small Set Expansion conjecture~\cite{RS10} holds, and within $1/n^{o(1)}$, where $n$ is the number of nodes in the graph, if the Gap Exponential Time Hypothesis holds~\cite{M17}.
 \item We propose two greedy approximation algorithms for \MCI that guarantee, respectively, approximation factors of $1-e^{-\frac{(1-\epsilon)(t-1)}{k-1}}$ and $(1-\epsilon)(1-\frac{1}{e})^2\frac{k}{4n}$, where $t\geq 2$ is a constant tuning parameter and $\epsilon$ is any positive constant.
 \item We show that combining the two proposed algorithms we can achieve an approximation factor of $\Omega(1/\sqrt{n})$.
 \item We implemented the proposed algorithms and experimentally compared the solutions provided by our approximation algorithm with optimal solutions computed by means of an exact IP formulation. We experimentally show that our algorithm produces solutions that are very close to the optimum and that it is highly effective in increasing the coverage centrality of a target user.
\end{itemize}

\section{Notation and problem statement}
\label{sec:notation}

Let $G=(V,E)$ be an undirected graph where $|V|=n$ and $|E|=m$. 
For each node $v$, $N_v$ denotes the set of neighbors of $v$, i.e. $N_v=\{u~|~\{u,v\}\in E\}$.
Given two nodes $s$ and $t$, we denote by $d_{st}$ and $P_{st}$ the distance from $s$ to $t$ in $G$ and the set of nodes in any shortest path from $s$ to $t$ in $G$, respectively.
For each node $v$, the \emph{coverage centrality}~\cite{Y14} of $v$ is defined as the number of pairs $(s,t)$ such that $v$ is contained in a shortest path between $s$ and $t$, formally,
\begin{equation*}
 c_v = |\{ (s,t)\in V\times V ~|~ v\in P_{st}, v\neq s, v\neq t \}|.
\end{equation*}

In this paper, we consider graphs that are augmented by adding a set $S$ of arcs not in $E$. Given a set $S\subseteq (V\times V)\setminus E$ of arcs, we denote by $G(S)$ the graph augmented by adding the arcs in $S$ to $G$, i.e. $G(S)=(V,E\cup S)$. For a parameter $x$ of $G$, we denote by $x(S)$ the same parameter in graph $G(S)$, e.g. the distance from $s$ to $t$ in $G(S)$ is denoted as $d_{st}(S)$.

The coverage centrality of a node might change if the graph is augmented with a set of arcs. In particular, adding arcs incident to some node $v$ can increase the coverage centrality value of $v$. We are interested in finding a set $S$ of arcs incident to a particular node $v$ that maximizes $c_v(S)$. Therefore, we define the \emph{Maximum Coverage Improvement} (\MCI) problem as follows: Given an undirected graph $G=(V,E)$, a node $v\in V$, and an integer $k\in\mathbb{N}$, find a set $S$ of arcs incident to $v$, $S \subseteq\{\{u,v\}~|~u\in V\setminus N_v\}$, such that $|S|\leq k$ and $c_v(S)$ is maximized.

\section{Hardness of approximation}
We first show that \MCI cannot be approximated within a factor greater that $1-1/e$, unless $P=NP$. Then, we show that, under stronger conditions, it cannot be approximated to within a factor greater than $n^{-f(n)}$, for any $f\in o(1)$. 

\subsection{Constant bound}
Our first hardness of approximation result is obtained by reducing the \textit{Maximum Set Coverage} (\MSC) problem to \MCI. The problem \MSC is defined as follows: Given a ground set $U$, a collection ${F} = \{S_1,S_2,\ldots ,S_{|{F}|}\}$ of subsets of $U$, and an integer $k'\in\mathbb{N}$, find a sub-collection ${F}'\subseteq{F}$ such that $|{F}'|\leq k'$ and $s({F}') = |\cup_{S_i\in {F}'}S_i|$ is maximized.
It is known that the \MSC\ problem cannot be approximated within a factor greater than $1-\frac{1}{e}$, unless $P=NP$~\cite{F98}.

\begin{theorem}\label{th:reductionmc}
There is no polynomial time algorithm with approximation factor greater than $1-\frac{1}{e}$ for the \MCI problem on undirected graphs, unless $P=NP$.
\end{theorem}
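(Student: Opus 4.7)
The plan is to reduce from \MSC, which Feige showed cannot be approximated beyond $1-1/e$ unless $P=NP$. Given an \MSC instance $(U,F,k')$ with $U=\{e_1,\dots,e_{|U|}\}$ and $F=\{S_1,\dots,S_{|F|}\}$, I build an \MCI instance $(G,v,k)$ with $k=k'$. Let $M$ be a polynomial in $|U|+|F|$ to be fixed later. The graph $G$ consists of: the target $v$; a ``sink'' $t$ adjacent to $v$; two hub nodes $w_1,w_2$ with edges $\{t,w_1\}$ and $\{w_1,w_2\}$; set-nodes $a_1,\dots,a_{|F|}$ forming a clique; and, for each element $e_j$, $M$ copies $e_j^{(1)},\dots,e_j^{(M)}$, each adjacent to $w_2$ and to every $a_i$ with $e_j\in S_i$. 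The candidate edges for the \MCI solution are precisely $\{v,a_i\}$, for $i=1,\dots,|F|$.

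The heart of the proof is the structural identity: if $S=\{\{v,a_i\}:i\in I\}$ with $|I|=k$ and $F_I=\{S_i:i\in I\}$, then
\[
c_v(S) \;=\; M\cdot s(F_I) \;+\; \eta(I), \qquad 0\le\eta(I)\le |F|+k.
\]
To prove this I would examine each type of vertex pair in $G(S)$ and compute its distance in $G(S)$. The $M\cdot s(F_I)$ term comes from the pairs $(t,e_j^{(r)})$ with $e_j$ covered by $F_I$: the path $t\to v\to a_i\to e_j^{(r)}$ (through $v$) has length $3$ and ties with the hub path $t\to w_1\to w_2\to e_j^{(r)}$, so $v$ lies on a shortest path; if $e_j$ is not covered, every path via $v$ has length at least $4$ while the hub path still has length $3$, so $v$ lies on no shortest path. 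The noise $\eta(I)$ consists only of the $|F|$ pairs $(t,a_i)$ (all through $v$, either directly if $i\in I$ or via the clique hop $v\to a_{i'}\to a_i$ if $i\notin I$), together with the $k$ pairs $(w_1,a_i)$ for $i\in I$ whose length-$3$ shortest path $w_1\to t\to v\to a_i$ ties with $w_1\to w_2\to e_j^{(r)}\to a_i$; all remaining pair types (among $a_i$'s, among element copies, and between $a_i$'s and element copies) have shortest paths of length at most $2$ that avoid $v$.

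Given this identity, the approximation-preservation argument is routine. Since $M\cdot \mathrm{OPT}_{\MSC}\le \mathrm{OPT}_{\MCI}\le M\cdot\mathrm{OPT}_{\MSC}+2|F|$, any $(1-1/e+\epsilon)$-approximation for \MCI produces a collection covering at least $(1-1/e+\epsilon)\,\mathrm{OPT}_{\MSC} - O(|F|/M)$ elements; fixing $M$ to be any polynomial with $M\ge 4|F|/\epsilon$ makes the additive loss strictly less than $\epsilon/2$, yielding a $(1-1/e+\epsilon/2)$-approximation for \MSC and contradicting Feige's theorem.

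The main obstacle will be the exhaustive case analysis underlying the structural identity, and specifically ensuring no unexpected pair has a shortest path through $v$. The roles of the clique on $\{a_i\}$ and of the length-two hub path $t\to w_1\to w_2$ are both essential: without the clique, pairs $(a_i,e_j^{(r)})$ with $i\in I$ and $e_j$ covered by another $a_{i'}\in I$ would acquire a length-$3$ shortest path through $v$, contributing $\Omega(Mk|U|)$ noise that scales with $M$ and would kill the argument; without the hub path of exactly length two, either uncovered element copies would also register a shortest path through $v$ (making the signal independent of $I$) or even covered copies would not, destroying the correspondence.
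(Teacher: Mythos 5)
Your high-level strategy is the same as the paper's (an approximation-preserving reduction from \MSC in which each element is blown up into many copies so that the ``signal'' $M\cdot s(F_I)$ dominates a bounded amount of ``noise''), but your gadget has a genuine flaw: you are not allowed to declare that ``the candidate edges for the \MCI solution are precisely $\{v,a_i\}$.'' In the \MCI problem the algorithm may add \emph{any} edge $\{v,u\}$ with $u\notin N_v$, and your structural identity collapses for such solutions. Concretely, consider the single added edge $\{v,w_2\}$. For every element copy $e_j^{(r)}$ the path $t\to v\to w_2\to e_j^{(r)}$ has length $3$ and ties with the hub path $t\to w_1\to w_2\to e_j^{(r)}$, so $v$ lies on a shortest path for \emph{all} $M|U|$ pairs $(t,e_j^{(r)})$ at the cost of one edge. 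Hence an \MCI algorithm can return a near-optimal solution (value at least $M|U|\ge M\cdot \mathrm{OPT}_{\MSC}$) from which your extraction rule (``take the sets $S_i$ with $\{v,a_i\}$ added'') recovers an \MSC solution covering nothing. The identity $c_v(S)=M\cdot s(F_I)+\eta(I)$ with $\eta\le |F|+k$ is therefore false for general feasible $S$, and the approximation-preservation step has nothing to stand on. (Edges $\{v,e_j^{(r)}\}$ also cause leakage, e.g.\ pairs $(e_j^{(r)},t)$ and $(e_j^{(r)},e_{j'}^{(r')})$, though that leakage is only $O(k^2)$ and could be absorbed; the $\{v,w_2\}$ edge is the fatal one.)

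The paper's construction is designed precisely to survive arbitrary edge additions: the analogue of your hub is a single node $z$ that sits at distance \emph{two} from the element copies (via the $F$-layer), so adding $\{v,z\}$ yields $T$-to-$M$ paths of length $4$, strictly longer than the existing length-$3$ paths and hence useless; edges from $v$ directly into the $M$-layer are accounted for explicitly and contribute at most $2kT$, which is negligible against the signal $2(D+1)T\cdot s(S_{\MSC})$ because $D$ is chosen of order $k/\beta$. If you want to salvage your gadget you must either redesign the hub so that no single non-$a_i$ node gives $v$ a length-$\le 2$ route to all element copies (e.g.\ lengthen the hub and verify every alternative attachment point), or explicitly bound the contribution of every class of addable edge, as the paper does. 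A minor further point: the paper's $c_v$ counts ordered pairs, so your identity should carry a factor of $2$ throughout; this is harmless but should be stated.
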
 
\begin{proof}


Assume that we have access to a polynomial time approximation algorithm $A_{\MCI}$ for the \MCI problem with approximation factor $1-\frac{1}{e}+\epsilon$ for some positive number $\epsilon$. We consider an instance $I_{\MSC}$ of the \MSC problem and build the instance $I_{\MCI}$ of \MCI shown in Fig.~\ref{fig:reduction_MSC}.     
\begin{figure}
  \centering 
  \includegraphics[scale=.57]{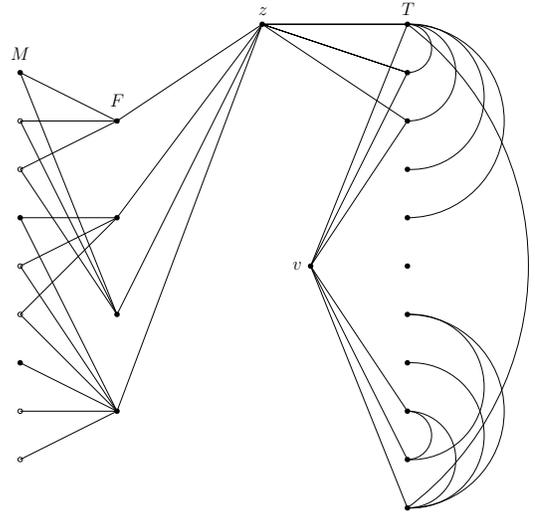}
  \caption{The $I_{\MCI}$ instance used in Theorem~\ref{th:reductionmc}.}
  \label{fig:reduction_MSC}
\end{figure}
The instance consists of $5$ vertical layers of nodes. For each element of $U$ of the \MSC instance, we have a member (filled circle) and $D$ copies (unfilled circles) in the $M$-layer to the left. In the $F$-layer there is a node for each set in the collection $F$ in the \MSC instance. A node in the $F$-layer is connected to all the corresponding members and copies in the $M$-layer. In the third layer there is a single node $z$ connecting all nodes in the $F$-layer to all nodes in the $T$-layer to the far right. In the fourth layer we have the node $v$ that is connected to all nodes in the $T$-layer. All the nodes in the $T$-layer to the right form a clique. Note that not all the edges are shown in the figure.
Let $\beta > 0$ be a sufficiently small positive constant satisfying
$$1-\frac{1}{e} < \frac{1}{1+\beta}\left(1-\frac{1}{e}+\epsilon\right) \enspace .$$
To improve the readability, we will not distinguish between a set and its cardinality. E.g., $T$ can represent the set of all the $T$-nodes and $T$ can also represent the number of $T$-nodes. Our aim is to choose relatively small $D$ and $T$ such that
\begin{equation}\label{eq:CT}
(F+M+1)^2 \leq (\beta (D+1)-k-F)T \enspace .
\end{equation}
We choose $D$ and $T$ as follows (note that $M = U(D+1)$):
$$D = \left\lceil\frac{1+k+F}{\beta}-1\right\rceil \quad \quad T = (F+M+1)^2\enspace .$$
Since $\beta$ is a constant,  $T$ and $M$ are polynomial in $|I_{\MSC}|$.

Now let $S_{\MCI}$ be the solution computed by $A_{\MCI}$ given the $I_{\MCI}$ instance as input. Let $S_{\MSC}$ be the solution for \MSC corresponding to all the sets for which there is an edge between the $F$-node and $v$ in $S_{\MCI}$. Note that $S_{\MSC}$ can be computed in polynomial time.
Let $A$, $A_1$ and $A_2$ be defined as follows:
$$A = \{(s, t) \in V \times V~|~v \in P_{st}(S_{\MCI})\}$$
$$A_1 = \{(s, t) \in A~|~(s, t) \in (M \times T) \cup (T \times M)\}$$
$$A_2 = A \setminus A_1 \enspace .$$
We now have the following identity $c_v(S_{\MCI}) = A_1+A_2$.
The set $A_1$ consists of all the pairs of vertices with one element in $M$ and one element in $T$ that are covered by $v$ in the graph corresponding to $S_{\MCI}$. The contribution to $A_1$ of the edges in $S_{\MCI}$ with one element in $F$ is $2(D+1)T \cdot s(S_{\MSC})$ and the contribution of edges in $S_{\MCI}$ with one element in $M$ is no more than $2kT$ and there might be some overlap. This allows us to establish the following upper bound on $A_1$:
\begin{equation*}
\label{eq:A1}
A_1 \leq 2(D+1)T \cdot s(S_{\MSC}) + 2kT \enspace .
\end{equation*}
The remaining pairs that might be covered 1) have an element in $F$ and an element in $T$, or 2) have no elements in $T$: $A_2 \leq 2TF+2(F+M+1)^2 \enspace .$

According to (\ref{eq:CT}), $D$ and $T$ have been chosen such that
$kT+TF+(F+M+1)^2 \leq \beta (D+1)T$,
implying
$c_v(S_{\MCI})\leq 2(1+\beta)(D+1)T \cdot s(S_{\MSC})$.
If we add edges to $v$ in the \MCI instance corresponding to the optimal solution of the \MSC instance, we obtain a feasible solution for the \MCI instance. For each covered element in the \MSC instance, we have $2(D+1)T$ covered pairs in the \MCI instance, therefore 
$2(D+1)T \cdot OPT(I_{\MSC}) \leq OPT(I_{\MCI})$.

The algorithm $A_{\MCI}$ has approximation factor $1-\frac{1}{e}+\epsilon$:
$$\frac{c_v(S_{\MCI})}{OPT(I_{\MCI})} \geq 1-\frac{1}{e}+\epsilon \enspace .$$
This allows us to set up the following inequality:
$$\frac{2(1+\beta)(D+1)T \cdot s(S_{\MSC})}{2(D+1)T \cdot OPT(I_{\MSC})} \geq 1-\frac{1}{e}+\epsilon \enspace .$$
We can now establish a lower bound for the approximation factor for the solution to our \MSC instance:
$$\frac{s(S_{\MSC})}{OPT(I_{\MSC})} \geq \frac{1}{1+\beta}\left(1-\frac{1}{e}+\epsilon\right) > 1-\frac{1}{e}\enspace ,$$
a contradiction.
\end{proof}

\subsection{Conditional bound}
To obtain our next hardness result, we reduce the \emph{Densest-$k$-Subgraph} (\DkS) problem to \MCI.  \DkS is defined as follows: Given a graph $G$ and an integer $k$, find a subgraph of $G$ induced on $k$ vertices with the maximum number of edges.


Several conditional hardness of approximation results for \DkS have been proved (see e.g.~\cite{M17} and references therein).
It has been shown that \DkS is hard to approximate within any constant bound under the Unique Games with Small Set Expansion conjecture~\cite{RS10}. Recently, it has been shown that under the exponential time hypothesis (ETH) there is no polynomial-time algorithm that approximates \DkS to within $n^{-1/(loglogn)^c}$, for some constant $c$. Moreover, under the stronger Gap-ETH assumption, the factor can be improved to $n^{-f(n)}$ for any function $f\in o(1)$~\cite{M17}.
The next theorem shows that there is an S-reduction~\cite{Crescenzi1997} from \DkS to \MCI only adding a constant to the number of nodes. Then, all the above mentioned inapproximability results extend to the \MCI problem. The current state-of-the-art algorithm for \DkS guarantees a  $\Omega(n^{-\frac{1}{4}-\epsilon})$ approximation~\cite{BhaskaraCCFV10}.

\begin{theorem}
There is an S-reduction from \DkS to \MCI. The reduction transforms a \DkS instance with $n$ vertices into an \MCI instance with $n+2$ vertices. 
\end{theorem}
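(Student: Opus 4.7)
The plan is to produce an \MCI instance on $n+2$ nodes in which the coverage-centrality gain at a designated vertex $v$ from adding $k$ edges is exactly twice the number of edges of the densest $k$-subgraph of the input \DkS graph. Given $G=(V,E)$ with $|V|=n$, I would build $G'=(V',E')$ with $V'=V\cup\{v,t\}$ and
\[ E' \;=\; \bar{E} \;\cup\; \{\{u_i,t\} : u_i\in V\}, \]
where $\bar{E}$ is the edge set of the complement of $G$ on $V$. The target vertex is $v$ and the budget is $k$.

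The first observation is that in $G'$, distances among original vertices are dictated by $G$: $d(u_i,u_j)=1$ when $\{u_i,u_j\}\in\bar{E}$, and $d(u_i,u_j)=2$ (via the hub $t$) when $\{u_i,u_j\}\in E$. Let $S$ be any feasible \MCI solution and write $S^{\star}=\{u_i:\{v,u_i\}\in S\}$. A case analysis of the pairs $(s,r)$ with $v\in P_{sr}(S)$ shows that the only pairs newly covered by $v$ are ordered pairs with both endpoints in $S^{\star}$ that were adjacent in $G$: for such a pair the distance stays $2$, but there is now a second length-$2$ path, through $v$. Every other case either has a direct complement edge of length $1$ (so any path through $v$ is strictly longer) or keeps its unique length-$2$ route through $t$ (any $v$-path is at least length $3$, because $v$ is not adjacent to $t$). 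This yields the clean identity $c_v(S)=2\,|E(G[S^{\star}])|$, and the backward mapping $g(S)=S^{\star}$ is immediate.

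Consequently $\mathrm{OPT}_{\MCI}(G')=2\,\mathrm{OPT}_{\DkS}(G)$, and the costs of corresponding solutions scale by the same factor of $2$, so the approximation ratio is preserved exactly, which is what an S-reduction requires. One minor housekeeping point: if an \MCI algorithm returns fewer than $k$ added edges, I pad arbitrarily, using the fact that $c_v$ is monotone under adding edges incident to $v$ (any new shortest path created by such an edge must use the new edge and so passes through $v$). The step I expect to require the most care is the case analysis showing that no ``spurious'' pairs contribute to $c_v(S)$: in particular, pairs $(u_i,u_j)$ with $i\in S^{\star}$ and $j\notin S^{\star}$, and pairs involving $t$, need to be checked to confirm that the length-$2$ route through $t$ (or the direct complement edge) remains a shortest path avoiding $v$.
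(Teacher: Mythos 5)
Your construction is essentially the paper's reduction: both attach a universal hub to the complement of the \DkS graph so that every original edge becomes a distance-$2$ pair routed through the hub, and then $v$ covers exactly those pairs whose endpoints are both linked to $v$, yielding coverage of $v$ equal to (twice, under the ordered-pair convention) the number of induced edges. The only cosmetic differences are that the paper also joins the hub to $v$ while you leave $v$ isolated, and that you make the factor-of-$2$ scaling and the padding of undersized solutions explicit; none of this changes the argument, so your proof is correct and matches the paper's.
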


\begin{proof} 
Consider a \DkS instance given by the graph $G'(V', E')$ and an integer $k'$. This instance is transformed into an \MCI instance given by the graph $G(V,E)$, a node $v \in V$ and an integer $k$ as follows. The set of nodes $V$ is formed by adding a node $x$ and the target node $v$ for the \MCI instance to $V'$: $V = \{x, v\} \cup V'$. The node $v$ is isolated and $x$ is connected to all other nodes in $V$. In addition to the edges incident to $x$, the edges in the complement of the original graph $G'$ are also added to the graph. Formally, $E = \{\{u,w\}~|~ \{u,w\}\not\in E'\}\cup \{\{x,u\}~|~u\in V' \cup\{v\}\}$. The value of $k$ is not changed: $k=k'$. See Fig.~\ref{fig:reduction_DkS}. We also need to explain how to transform a feasible solution of the \MCI instance into a feasible solution of the \DkS instance. Here we simply pick the nodes that are linking to $v$ in the feasible solution for the \MCI instance (excluding $x$). 

We now  prove that the reduction is an S-reduction. We claim that the following holds: The node $v$ is on the shortest path between $s$ and $t$ in the \MCI instance after adding $k$ edges if and only if 1) edges $\{v, s\}$ and $\{v, t\}$ are added, and 2) there is an edge between $s$ and $t$ in $G'$. The if-direction is clear. To prove the only-if-direction, we assume that $v$ is on a shortest path between the nodes $s$ and $t$ in the \MCI instance. If 1) is false, then the length of the shortest path through $v$ is at least $3$. If 2) is false, we also arrive at a contradiction.

This implies that the number of edges induced by a feasible solution of the \DkS instance is precisely the coverage centrality of $v$ in the corresponding \MCI instance and vice versa. This shows that the reduction is an S-reduction.
\end{proof}

\begin{figure}
  \centering 
  \includegraphics[scale=0.7]{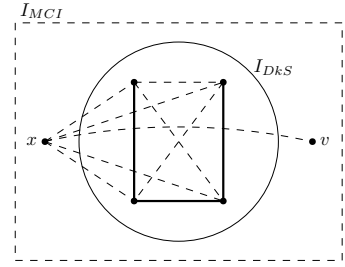}
  \caption{The reduction from \DkS. In the circle, we have the original \DkS instance (solid edges) that is transformed into the \MCI instance indicated by the rectangle (dashed edges).}
  \label{fig:reduction_DkS}
\end{figure}

\section{Approximation algorithm}

It is easy to see that, in the undirected case, the objective function is not submodular and that there are instances of \MCI (similar to that in Figure~\ref{fig:example}) for which the greedy algorithm by Nemhauser et al. exhibits an arbitrarily small approximation factor.
The main problem with the greedy algorithm is that it does not take into account the shortest paths that pass through two of the added edges. In this section, we show how to overcome this limitation and we give an algorithm that guarantees a $\Omega(1/\sqrt{n})$-approximation. 

The algorithm is based on a reduction to a generalization of the maximum coverage problem in which elements of a ground set are covered by pairs of ``objects'', instead of a single set, and we look for a bounded set of objects that maximizes the number of covered elements. We call this problem the \emph{Maximum Coverage with Pairs problem} (\MCPP). Formally, \MCPP is defined as: Given a ground set $X$, a set $O$ of objects, and an integer $k\in\mathbb{N}$, find a set $O'\subseteq O$, such that $|O'|\leq k$ and $c(O') = |\cup_{i,j\in O'}C(i,j)|$ is maximum, where $C(i,j)$ denotes the subset of $X$ covered by pair $\{i,j\}$, for each unordered pair of objects $\{i,j\}$.


Given $O'\subseteq O$, let $C(O') = \cup_{i,j\in O'}C(i,j)$ and $c(O')=|C(O')|$. Wlog, we assume that each element in $X$ is covered by at least a pair of objects in $O$ and that $k\leq |O|$. 

The problem \MCI can be reduced to \MCPP as follows:
 for each pair $(s,t)$ of nodes in $G$, we add an element $(s,t)$ to $X$;
 for each $u\in V\setminus N(v)$, we add an object $i_u=\{u,v\}$ to $O$ (i.e. all the edges that can be added incident to $v$);
 for each pair of objects $i_u,i_w\in O$, we set $C(i_u,i_w) = \{ (s,t)~|~ v\in P_{st}(\{\{u,v\},\{w,v\}\})\}$;
 we set $k'=k$.
%
Any feasible solution $O'$ to the above instance of \MCPP corresponds to a feasible solution $S=O'$ for \MCI. Since for each pair of nodes $(s,t)$ in $V$ the shortest path between $s$ and $t$ in $G(S)$ can only pass through at most two edges of $S$, then $c_v (S) = |\cup_{i_u,i_w\in O'}C(i_u,i_w)| = c(O')$. Therefore, any approximation algorithm for \MCPP can be used to solve \MCI with the same approximation factor.

We observe that \MCPP is a generalization of the \DkS problem,
which corresponds to the case in which $|C(i,j)|\leq 1$, for $i,j\in O$, and each element of $X$ is covered by exactly one pair of objects (i.e. objects correspond to nodes and elements correspond to edges). Therefore, \MCPP is at least as hard to approximate as \DkS.

Our algorithm exploits two procedures, called \Greedy1 and \Greedy2, that return two sets of objects, and selects one of these sets that gives the maximum coverage. In particular, Procedure \Greedy1 returns a set that guarantees an approximation factor of $\left(1-e^{-\frac{(1-\epsilon)(t-1)}{k-1}}\right)$, where $t\in [2,k]$ is a constant integer parameter of the procedure and $\epsilon$ is any positive constant, while Procedure \Greedy2 guarantees an approximation factor of $(1-\epsilon)\frac{1}{4}\left(1-\frac{1}{e}\right)^2\frac{k}{|O|}$ for any constant $\epsilon >0$ (see Theorems~\ref{th:greedyone} and~\ref{th:greedytwo}). The next theorem shows the overall approximation factor. When applied to the \MCI problem, it guarantees a $\Omega(1/\sqrt{n})$-approximation.

\begin{theorem}
 Let $O^*$ be an optimum solution for \MCPP, let $O_1$ and $O_2$ be the solutions of Procedures \Greedy$1$ and \Greedy$2$, then
 \[
   \max\{c(O_1),c(O_2)\}\geq (1-\epsilon)\frac{1}{2}\left(1-\frac{1}{e}\right)^{3/2}\sqrt{\frac{t-1}{|O|}}c(O^*), 
 \]
for any constant $t\geq 2$ and $\epsilon \in (0,1)$.
\end{theorem}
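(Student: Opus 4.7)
The plan is to combine the two approximation guarantees through the elementary inequality $\max\{a,b\}\ge \sqrt{ab}$, together with a concavity-based linearization of the exponential expression appearing in the guarantee for \Greedy$1$. Concretely, from the previously stated Theorems~\ref{th:greedyone} and~\ref{th:greedytwo} I would record
\[
 c(O_1) \ge \Bigl(1-e^{-\frac{(1-\epsilon)(t-1)}{k-1}}\Bigr)\,c(O^*), \qquad c(O_2) \ge (1-\epsilon)\tfrac{1}{4}\bigl(1-\tfrac{1}{e}\bigr)^{2}\tfrac{k}{|O|}\,c(O^*),
\]
and then bound $\max\{c(O_1),c(O_2)\}\ge \sqrt{c(O_1)\,c(O_2)}$. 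This reduces the theorem to a purely arithmetic inequality between the two approximation factors and the target $(1-\epsilon)\tfrac{1}{2}(1-\tfrac{1}{e})^{3/2}\sqrt{(t-1)/|O|}$.

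The main obstacle is to replace the exponential factor by something linear in $t-1$ so that the square root yields $\sqrt{t-1}$. For this I would use the fact that $g(x)=1-e^{-x}$ is concave on $[0,1]$ with $g(0)=0$ and $g(1)=1-1/e$; concavity (or equivalently the chord inequality, verified by looking at $f(x)=1-e^{-x}-(1-1/e)x$, which is concave with $f(0)=f(1)=0$) gives $1-e^{-x}\ge (1-1/e)x$ for all $x\in[0,1]$. Since $t\in[2,k]$ and $\epsilon\in(0,1)$, the exponent $x=(1-\epsilon)(t-1)/(k-1)$ indeed lies in $[0,1)$, so
\[
 1-e^{-\frac{(1-\epsilon)(t-1)}{k-1}} \ \ge\ \bigl(1-\tfrac{1}{e}\bigr)\cdot \frac{(1-\epsilon)(t-1)}{k-1}.
\]

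Plugging this into the geometric-mean bound,
\[
 \max\{c(O_1),c(O_2)\} \ \ge\ \sqrt{\bigl(1-\tfrac{1}{e}\bigr)\tfrac{(1-\epsilon)(t-1)}{k-1}\cdot(1-\epsilon)\tfrac{1}{4}\bigl(1-\tfrac{1}{e}\bigr)^{2}\tfrac{k}{|O|}}\,c(O^*),
\]
which simplifies to $(1-\epsilon)\tfrac{1}{2}(1-\tfrac{1}{e})^{3/2}\sqrt{\tfrac{k(t-1)}{(k-1)|O|}}\,c(O^*)$. Finally, using $k/(k-1)\ge 1$, the factor $\sqrt{k/(k-1)}$ can be dropped to yield the claimed bound $(1-\epsilon)\tfrac{1}{2}(1-\tfrac{1}{e})^{3/2}\sqrt{(t-1)/|O|}\,c(O^*)$. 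The only subtlety worth flagging in the write-up is to justify that $(1-\epsilon)(t-1)/(k-1)\le 1$ so that the linearization of $1-e^{-x}$ is legitimate; everything else is bookkeeping.
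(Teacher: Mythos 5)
Your proposal is correct and follows essentially the same route as the paper: the geometric-mean bound $\max\{a,b\}\ge\sqrt{ab}$, the linearization $1-e^{-x}\ge(1-1/e)x$ on $[0,1]$ applied to the \Greedy1 factor, and dropping the harmless $\sqrt{k/(k-1)}\ge 1$ term. Your explicit justification of the chord inequality and of the exponent lying in $[0,1]$ matches the paper's footnote.
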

\begin{proof}
 The value of $\max\{c(O_1),c(O_2)\}$ is at least the geometric mean of $c(O_1)$ and $c(O_2)$. Moreover, $\left(1-e^{-\frac{(1-\epsilon)(t-1)}{k-1}}\right)\geq\left(1-\frac{1}{e}\right)\frac{(1-\epsilon)(t-1)}{k-1} $, for any $\epsilon\in(0,1)$ and $k> 1$\footnote{Indeed, $1-e^{-x}\geq (1-e^{-1})x$, for any $x\in [0,1]$, and $\frac{(1-\epsilon)(t-1)}{k-1}\in [0,1]$ for any $\epsilon\in(0,1)$, $t\leq k$, and $k> 1$.}. Therefore,
 \begin{align*}
    &\max\{c(O_1),c(O_2)\}\geq \sqrt{c(O_1)\cdot c(O_2)} \\
                         &\geq\! \sqrt{\! \left(\!1-\frac{1}{e}\!\right)\!\!\frac{(1-\epsilon)(t-1)}{k-1} c(O^*)\!\cdot\!(1-\epsilon)\frac{1}{4}\!\left(\!1-\frac{1}{e}\!\right)^{\!\!2}\!\frac{k}{|O|}c(O^*) }\\
                         &\geq (1-\epsilon)\frac{1}{2}\left(1-\frac{1}{e}\right)^{3/2}\sqrt{\frac{t-1}{|O|}}c(O^*).\qedhere
 \end{align*}
\end{proof}

We now introduce procedures \Greedy1 and \Greedy2.

\subsection{Procedure \Greedy1}

\begin{algorithm2e}[t]
  $O':=\emptyset$\;
  \While{$|O'|\leq k - t$}
  {
    $Z:=\arg\max_{Z\subseteq O , |Z|\leq t}\{C(O' \cup Z) - C(O')\}$\;
%
    $O':=O'\cup Z$\;
  }
  
   $Z:=\arg\max_{Z\subseteq O , |Z|\leq k-|O'|}\{C(O' \cup Z) - C(O')\}$\;\label{algo:undirectedpairs:finalone}
  $O':=O'\cup Z$\;\label{algo:undirectedpairs:finaltwo}
  \caption{Procedure \Greedy1}
  \label{algo:undirectedpairs}
\end{algorithm2e}

The pseudo-code of Procedure \Greedy1 is reported in Algorithm~\ref{algo:undirectedpairs}.
For some fixed constant integer $t\in[2,k]$, the procedure greedily selects a set of objects of size $t$ that maximizes the increment in the objective function. In particular, it starts with an empty solution and iteratively adds to it a set $Z$ of $t$ objects that maximizes $c(O' \cup Z) - c(O')$, where $O'$ is the solution computed so far. The procedure stops when it has added at least $k-t$ objects to $S$. Eventually, if $|O'| < k$, it completes the solution by adding a further set of $k-|O'| < t$ objects (lines~\ref{algo:undirectedpairs:finalone}--\ref{algo:undirectedpairs:finaltwo}).
Note that one or more objects of the selected set might already belong to $O'$ (but not all of them). Hence, Algorithm~\ref{algo:undirectedpairs} has at least $\left\lfloor\frac{k-t}{t}\right\rfloor$ and at most $k$ iterations.

For each iteration $i$ of Algorithm~\ref{algo:undirectedpairs}, let $O_i$ be the set $O'$ at the end of iteration $i$ and let $O^*$ be an optimal solution. The next lemma is used to prove the approximation bound, the full proof can be found in the supplementary material.
\begin{lemma}\label{lem:undirectedpairs:two}
  After each iteration $i$ of Algorithm~\ref{algo:undirectedpairs}, the following holds 
  \begin{equation}\label{eq:undirectedpairs:two}
   c(O_i) \geq \left(1-\left(1-\frac{t(t-1)}{k(k-1)}\right)^i \right) c(O^*).
  \end{equation}
\end{lemma}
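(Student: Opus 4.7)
The plan is to imitate the classical Nemhauser–Wolsey–Fisher analysis of the greedy algorithm for monotone submodular maximization, replacing the submodularity step by a probabilistic averaging argument that exploits the pair-coverage structure.

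First I would prove a per-iteration progress bound: for every $i \geq 0$,
\begin{equation*}
c(O_{i+1}) - c(O_i) \;\geq\; \frac{t(t-1)}{k(k-1)}\bigl(c(O^*) - c(O_i)\bigr).
\end{equation*}
To establish this, I would draw a subset $Z \subseteq O^*$ of size exactly $t$ uniformly at random. Fix any element $x \in C(O^*) \setminus C(O_i)$; by definition of $C(\cdot)$, there exists a pair $\{a_x,b_x\} \subseteq O^*$ with $x \in C(a_x,b_x)$. The probability that both $a_x$ and $b_x$ land in the random $Z$ equals $\binom{k-2}{t-2}/\binom{k}{t} = \frac{t(t-1)}{k(k-1)}$, so $x$ is covered by $O_i \cup Z$ with at least this probability. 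Linearity of expectation then gives
\begin{equation*}
\mathbb{E}\bigl[c(O_i \cup Z) - c(O_i)\bigr] \;\geq\; \frac{t(t-1)}{k(k-1)}\bigl(c(O^*) - c(O_i)\bigr),
\end{equation*}
so at least one deterministic $t$-subset $Z^* \subseteq O^* \subseteq O$ attains this bound. Because the greedy step picks the $t$-subset of $O$ maximizing $c(O_i \cup Z) - c(O_i)$, the desired inequality follows.

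Second, I would finish by induction on $i$. The inequality above is equivalent to
\begin{equation*}
c(O^*) - c(O_{i+1}) \;\leq\; \left(1 - \frac{t(t-1)}{k(k-1)}\right)\bigl(c(O^*) - c(O_i)\bigr),
\end{equation*}
and since $c(O_0) = c(\emptyset) = 0$, iterating yields
\begin{equation*}
c(O^*) - c(O_i) \;\leq\; \left(1 - \frac{t(t-1)}{k(k-1)}\right)^{i} c(O^*),
\end{equation*}
which rearranges to \eqref{eq:undirectedpairs:two}.

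The main conceptual obstacle is the progress inequality, since the objective $c(\cdot)$ is not submodular and the standard "exchange one element with the best element of $O^*$" argument fails here — an element covered by a pair $\{a,b\}$ of $O^*$ need not be covered by either $a$ or $b$ individually when added to $O_i$. The fix is precisely to sample a $t$-subset so that whole pairs of $O^*$ are captured with probability $\Theta(t^2/k^2)$; this is also what forces the factor $t(t-1)/(k(k-1))$ (rather than $t/k$) in the bound. A minor bookkeeping point is that the final lines of the algorithm may pick fewer than $t$ new objects, but since the lemma bounds $c(O_i)$ only in terms of the iteration count, the analysis of the main while-loop suffices, and monotonicity of $c$ ensures the bound persists through the closing lines.
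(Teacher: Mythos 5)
Your proposal is correct and follows essentially the same route as the paper: the paper's auxiliary Lemma (in the supplementary material) establishes exactly your per-iteration progress bound by counting, over all $t$-subsets $Z$ of $O^*$, how many contain a covering pair $\{j,l\}$ of each uncovered element (namely $\binom{k-2}{t-2}$ out of at most $\binom{k}{t}$), which is precisely your expectation argument written as a sum, and then concludes by the same induction. The only cosmetic difference is probabilistic versus deterministic averaging, so there is nothing substantive to distinguish the two proofs.
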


\begin{theorem}\label{th:greedyone}
If $I$ is the number of iterations of Algorithm~\ref{algo:undirectedpairs}, then
 \[
  c(O_I) \geq \left(1-e^{-(1-\epsilon)\frac{t-1}{k-1}}\right) c(O^*)\enspace ,
 \]
 for any constant $\epsilon\in (0,1)$.
\end{theorem}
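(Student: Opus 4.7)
The plan is to invoke Lemma~\ref{lem:undirectedpairs:two} at iteration $I$ and then lower bound $I$ in terms of $k$ and $t$. Setting $\alpha := \tfrac{t(t-1)}{k(k-1)}$, the lemma gives $c(O_I)\geq (1-(1-\alpha)^I)\,c(O^*)$; applying the standard inequality $1-\alpha\leq e^{-\alpha}$ converts this to $c(O_I)\geq(1-e^{-\alpha I})\,c(O^*)$. Matching the target bound thus reduces to showing $\alpha I\geq (1-\epsilon)(t-1)/(k-1)$, equivalently $I\geq (1-\epsilon)\,k/t$.

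To lower bound $I$ I would observe that the while loop exits only once $|O'|>k-t$, and that each iteration enlarges $|O'|$ by at most $t$ objects. The minimum number of iterations is therefore attained when every iteration brings in exactly $t$ new objects, in which case a short arithmetic check (writing $k = qt+r$ with $0\leq r<t$) gives roughly $k/t$ while iterations, plus the one final call at lines~\ref{algo:undirectedpairs:finalone}--\ref{algo:undirectedpairs:finaltwo}. Hence $I\geq k/t$, which is already stronger than $(1-\epsilon)\,k/t$ for any $\epsilon\in(0,1)$; substituting back yields
\[
  c(O_I)\;\geq\;\bigl(1-e^{-(1-\epsilon)(t-1)/(k-1)}\bigr)\,c(O^*),
\]
as claimed.

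The entire argument is essentially a bookkeeping exercise once Lemma~\ref{lem:undirectedpairs:two} is in hand; the only point that deserves care is confirming that the final call in lines~\ref{algo:undirectedpairs:finalone}--\ref{algo:undirectedpairs:finaltwo} can be counted as a bona fide iteration of the inductive step of the lemma — it picks the marginal-gain-maximising set of size at most $k-|O'|\leq t$, which is exactly the form used inside the loop body, so the lemma applies with no modification. The $(1-\epsilon)$ slack in the statement does not appear strictly necessary in the deterministic setting and presumably leaves room for approximate implementations of the size-$\leq t$ $\arg\max$ used in each iteration.
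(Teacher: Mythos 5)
Your overall strategy is the same as the paper's: invoke Lemma~\ref{lem:undirectedpairs:two} at the last iteration, apply $1-x\leq e^{-x}$, and reduce everything to the bound $I\geq(1-\epsilon)k/t$ on the number of iterations. The reduction itself ($\alpha I\geq(1-\epsilon)\tfrac{t-1}{k-1}$ iff $I\geq(1-\epsilon)k/t$) is correct.

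The gap is in how you obtain $I\geq k/t$: it requires counting the final call at lines~\ref{algo:undirectedpairs:finalone}--\ref{algo:undirectedpairs:finaltwo} as an iteration to which the lemma applies, and your justification for this is exactly backwards. The per-iteration inequality behind Lemma~\ref{lem:undirectedpairs:two} (Lemma~3 in the supplementary material) is proved by comparing the selected set against \emph{every} subset of $O^*$ of cardinality exactly $t$: one argues that some $t$-subset containing a covering pair $\{j,l\}$ lies in the feasible region of the $\arg\max$, and the count $\binom{k-2}{t-2}/\binom{k}{t}=\tfrac{t(t-1)}{k(k-1)}$ hinges on all such $t$-subsets being candidates. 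In the final call the $\arg\max$ ranges only over sets of size at most $k-|O'|$, which is \emph{strictly less} than $t$, so no $t$-subset of $O^*$ is feasible there and the comparison step fails; in the extreme case $k-|O'|=1$ the final call may contribute nothing comparable to $\tfrac{t(t-1)}{k(k-1)}(c(O^*)-c(O_{i-1}))$. Consequently only the while-loop iterations can be charged against the lemma, and there are only about $k/t-1$ of them in the worst case (the paper uses $I\geq\left\lfloor\tfrac{k-t}{t}\right\rfloor\geq \tfrac{k}{t}-2$). This shortfall is precisely why the $(1-\epsilon)$ slack is needed and why the paper must additionally assume $k$ exceeds a constant $k_0(\epsilon,t)$, disposing of constant $k$ by brute force; your closing remark that the $(1-\epsilon)$ "does not appear strictly necessary" is the symptom of this missed point. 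To repair your argument, drop the final call from the count, bound the number of while-loop iterations from below by $\left\lfloor\tfrac{k-t}{t}\right\rfloor\geq\tfrac{k}{t}-2\geq(1-\epsilon)\tfrac{k}{t}$ for $k\geq 2t/\epsilon$, and handle constant $k$ separately.
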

\begin{proof}
We observe that for any constant $\epsilon>0$, there exists a constant $k_0$, such that for each $k\geq k_0$, $I = \left\lfloor\frac{k-t}{t}\right\rfloor\geq \frac{k}{t} - 2\geq (1-\epsilon)\frac{k}{t}$. Note that, when $k$ is a constant the problem can be easily solvable in polynomial time by brute force and therefore we assume that $k$ is not a constant. 
%
Plugging $I$ into inequality~\eqref{eq:undirectedpairs:two}, we obtain:
 \begin{align*}
   c(O_I) &\geq \left(1-\left(1-\frac{t(t-1)}{k(k-1)}\right)^{ \left\lfloor\frac{k-t}{t}\right\rfloor} \right) c(O^*).\\
              & \geq \left(1-\left(1-\frac{t(t-1)}{k(k-1)}\right)^{ (1-\epsilon)\frac{k}{t}} \right) c(O^*).
 \end{align*}
 By calculus, it can be shown that $1-x \leq e^{-x}$, which implies that $\left(1-\frac{t(t-1)}{k(k-1)}\right)^{(1-\epsilon)\frac{k}{t}} \leq e^{-(1-\epsilon)\frac{t-1}{k-1}}$, and finally:
 \[
  \!\left(\!\!1\!-\!\left(\!\!1\!-\!\frac{t(t-1)}{k(k-1)}\!\right)^{\!(1-\epsilon)\frac{k}{t}} \right)\! c(O^*) 
 \!\geq\! \left(\!1\!-\! e^{-(1-\epsilon)\frac{t-1}{k-1}}\!  \right)\! c(O^*). \qedhere
 \]
\end{proof}

\subsection{Procedure \Greedy2}

\begin{algorithm2e}[t]

Define an instance of \MSC made of ground set $X$ and, for each object $o\in O$, a set equal to $N(o)$\;
Run the greedy algorithm in~\cite{NWF78} for \MSC to find a set $H$ of size $\left\lceil\frac{k}{2}\right\rceil$ objects\label{algo:FKP2:h}\;
Define an instance of \MSC made of ground set $D(H)=N(H)\setminus C(H)$ and, for each object $o\in O\setminus H$, a set equal to $\cup_{i \in H} C(o,i) \setminus C(H)$\;
Run the greedy algorithm in~\cite{NWF78} for \MSC to find a set $I$ of size $\left\lfloor\frac{k}{2}\right\rfloor$ objects\;
\Return $H\cup I$\;
  \caption{Procedure \Greedy2}
  \label{algo:FKP2}
\end{algorithm2e}

In order to describe Procedure \Greedy2, we need to introduce further notation.
Given a set $O'$ of objects, we denote by $N(O')$ the set of elements that the objects in $O'$ can cover when associated with any other object in $O$, that is $N(O') = \cup_{o\in O',i\in O}C(o,i)$. The \emph{degree} of $O'$ is the cardinality of $N(O')$ and it is denoted by $d(O')$. To simplify the notation, when $O'$ is a singleton, $O'=\{o\}$, we use  $N(o)$ and $d(o)$ to denote $N(O)$ and $d(O)$, respectively. Intuitively, $N(O')$ are the elements that are covered by at least an object in $O'$, while $C(O')$ are the elements that are covered by at least two objects in $O'$. In the following, we say that an element is \emph{single} covered by $O'$ in the former case and \emph{double} covered by $O'$ in the latter case. We observe that $d(O')\geq c(O')$, for any set of objects $O'$. Moreover, since the degree is defined as the size of the union of sets, then it is a monotone and submodular set function.

Procedure \Greedy2 is given in Algorithm~\ref{algo:FKP2}. First, the procedure looks for a set $H$ of $\left\lceil\frac{k}{2}\right\rceil$ objects with maximum degree. Since computing such a set is equivalent to solving an instance of \MSC, which is known to be $NP$-hard, we compute an approximation of it. In detail, the instance of \MSC is made of the same ground set $X$ and, for each object $o\in O$, a set equal to $N(o)$. Any set of objects $O'$ corresponds to a solution to this \MSC instance, where the number of single covered elements is equal to $d(O')$. Indeed, finding a set of $\left\lceil\frac{k}{2}\right\rceil$ objects that maximizes the degree in the \MCPP instance corresponds to finding a collection of sets that maximizes the single coverage of $X$ in this \MSC instance. Hence, we find a set $H$ that approximates the maximum single coverage of $X$, and, in particular, we exploit a greedy algorithm that guarantees an optimal approximation of $1-1/e$~\cite{NWF78}. 
Then, the procedure selects a set of $\left\lfloor\frac{k}{2}\right\rfloor$ objects in $O\setminus H$ that maximizes the single coverage of the elements in $N(H)$ not double covered by $H$. In other words, these objects, {\em along with} objects in $H$, double cover the maximum fraction of $N(H)$.
Again, computing such a set is equivalent to solving an instance of \MSC, and we find a $(1-1/e)$-approximation. In this case, the \MSC instance is made of the ground set $D(H) = N(H) \setminus C(H)$, and for each object $o\in O\setminus H$, a set equal to $\cup_{i \in H} C(o,i) \setminus C(H)$. The approximated solution found by the greedy algorithm for \MSC is denoted by $I$.
Procedure \Greedy2 outputs the set of objects $H\cup I$.

In the next lemma we establish a connection between the number of single and double covered elements, in particular, we show an upper bound to the optimal value $c(O^*)$ of the \MCPP instance as a function of $d(H)$, where $H$ is the set of objects selected at line~\ref{algo:FKP2:h} of Algorithm~\ref{algo:FKP2}. The full proof of the lemma can be found in the supplementary material.
\begin{lemma}\label{lem:FKP}
If $H$ is the set of objects selected at line~\ref{algo:FKP2:h} of Algorithm~\ref{algo:FKP2} and $O^*$ is an optimal solution for \MCPP, then
 \[
  d(H) \geq \frac{1}{2}\left(1-\frac{1}{e}\right)c(O^*).
 \]
\end{lemma}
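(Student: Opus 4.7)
The plan is to exploit the $(1-1/e)$-approximation guarantee of the greedy algorithm for \MSC together with a simple partitioning argument on $O^*$.

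First I would unpack what the greedy step for $H$ gives us. The \MSC instance at line~\ref{algo:FKP2:h} has ground set $X$ and one set $N(o)$ per object $o$, and we pick $\lceil k/2\rceil$ of them to maximize the union. By the classical result of Nemhauser et al., $d(H) \geq (1-1/e) d^*$, where $d^* = \max\{d(A) : A \subseteq O,\ |A| \leq \lceil k/2\rceil\}$. So the lemma reduces to showing $d^* \geq \tfrac{1}{2} c(O^*)$.

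To prove this, I would partition $O^*$ arbitrarily into two disjoint sets $A$ and $B$ with $|A| = \lceil k/2\rceil$ and $|B| = \lfloor k/2\rfloor$. Consider any element $x \in C(O^*)$; by definition there exist $i, j \in O^*$ with $x \in C(i,j)$. Since $\{i,j\} \subseteq A \cup B$, at least one of $i, j$ lies in $A$ or at least one lies in $B$. In either case $x \in N(A) \cup N(B)$, because $N(A)$ contains every element covered by some pair with one endpoint in $A$ (and similarly for $B$). Hence $C(O^*) \subseteq N(A) \cup N(B)$, so
\[
 d(A) + d(B) \geq |N(A) \cup N(B)| \geq c(O^*).
\]
By pigeonhole, at least one of $d(A), d(B)$ is at least $c(O^*)/2$. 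If the larger one is $d(B)$ with $|B| = \lfloor k/2\rfloor$, we pad $B$ by an arbitrary object to get a set of size $\lceil k/2\rceil$ without decreasing its degree (degree is monotone since $N$ is a union of sets). Either way, we exhibit a feasible set for the \MSC instance of line~\ref{algo:FKP2:h} with degree at least $c(O^*)/2$, so $d^* \geq c(O^*)/2$.

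Combining the two bounds yields $d(H) \geq (1-1/e) d^* \geq \tfrac{1}{2}(1-1/e) c(O^*)$, as claimed. The only subtlety is the size-parity issue (ensuring the heavy half has size at most $\lceil k/2\rceil$), which is handled by monotonicity of $d$; the rest is essentially a one-line pigeonhole on top of the greedy guarantee.
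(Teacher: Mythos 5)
Your proof is correct and follows essentially the same route as the paper's: the greedy $(1-1/e)$ guarantee combined with a halving argument showing some $\lceil k/2\rceil$-subset has degree at least $\tfrac{1}{2}c(O^*)$. The only (harmless) difference is that you partition $O^*$ itself and use the observation $C(O^*)\subseteq N(A)\cup N(B)$, whereas the paper routes through an auxiliary maximum-degree $k$-set $O_N^*$ and its best half, invoking $d(O_N^*)\geq d(O^*)\geq c(O^*)$ plus subadditivity; both yield the identical bound.
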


\begin{theorem}\label{th:greedytwo}
If $H\cup I$ is the output of Algorithm~\ref{algo:FKP2} and $O^*$ is an optimal solution for \MCPP, then
\[
  c(H\cup I) \geq (1-\epsilon)\frac{1}{4}\left(1-\frac{1}{e}\right)^2\frac{k}{|O|} c(O^*),
\]
for any constant $\epsilon \in (0,1)$.
\end{theorem}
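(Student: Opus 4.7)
The plan is to decompose $c(H\cup I)$ into the elements already double-covered by $H$ alone plus the new elements that become double-covered when some $o\in I$ is paired with some $i\in H$, and then lower-bound each piece via the $(1-1/e)$-guarantee of~\cite{NWF78} for the second greedy invocation together with Lemma~\ref{lem:FKP}.

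First, since $C(H)$ and $\bigcup_{o\in I,\,i\in H} C(o,i)\setminus C(H)$ are disjoint subsets of $C(H\cup I)$, I would write
\[
c(H\cup I)\ \geq\ c(H)\ +\ \Bigl|\bigcup_{o\in I,\,i\in H}C(o,i)\setminus C(H)\Bigr|.
\]
The second term is exactly what the second greedy call of Algorithm~\ref{algo:FKP2} maximizes on its \MSC instance with ground set $D(H)=N(H)\setminus C(H)$, so by~\cite{NWF78} it is at least $(1-1/e)$ times the optimum of that \MSC instance.

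Next I would lower-bound this \MSC optimum by a random-sampling / averaging argument. For every $x\in D(H)$, from $x\in N(H)$ there exist $o_H\in H$ and $o'\in O$ with $x\in C(o_H,o')$; if $o'\in H$ then $x\in C(H)$, contradicting $x\in D(H)$, so each $x\in D(H)$ admits a \emph{witness} $o'_x\in O\setminus H$ whose inclusion in $I$ forces $x$ to be newly double-covered via the pair $(o_H,o'_x)$. Drawing $I$ uniformly at random from the size-$\lfloor k/2\rfloor$ subsets of $O\setminus H$, the probability that $o'_x\in I$ equals $\lfloor k/2\rfloor/|O\setminus H|\geq \lfloor k/2\rfloor/|O|$, and linearity of expectation shows the \MSC optimum is at least $|D(H)|\cdot \lfloor k/2\rfloor/|O|$.

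Combining these estimates, using $|D(H)|=d(H)-c(H)$ and $(1-1/e)\lfloor k/2\rfloor/|O|\leq 1$ to discard the nonnegative term $c(H)\bigl(1-(1-1/e)\lfloor k/2\rfloor/|O|\bigr)$, gives
\[
c(H\cup I)\ \geq\ (1-1/e)\,\frac{\lfloor k/2\rfloor}{|O|}\,d(H).
\]
Invoking Lemma~\ref{lem:FKP} for $d(H)\geq \tfrac{1}{2}(1-1/e)c(O^*)$ and using $\lfloor k/2\rfloor\geq(1-\epsilon)k/2$ (valid once $k$ is sufficiently large in terms of $\epsilon$; for any constant $k$ the problem is polynomially solvable by brute force) will yield the target bound $(1-\epsilon)\frac{1}{4}(1-1/e)^2\frac{k}{|O|}c(O^*)$.

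The main subtlety is the witness argument: one must check that for every single-covered element of $N(H)$ there exists a witness object \emph{outside} $H$. This is exactly why the algorithm is set up with ground set $D(H)=N(H)\setminus C(H)$ rather than $N(H)$, since any $x$ whose only witness lay in $H$ would already belong to $C(H)$. The remaining steps are routine arithmetic and the standard guarantee of the greedy algorithm for \MSC.
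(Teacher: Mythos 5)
Your proof is correct and follows the same overall architecture as the paper's: the same decomposition $c(H\cup I)\geq c(H)+\bigl|\bigcup_{o\in I,\,i\in H}C(o,i)\setminus C(H)\bigr|$, the same use of the $(1-1/e)$ greedy guarantee for the second \MSC call, Lemma~\ref{lem:FKP} for $d(H)$, and the same closing arithmetic (including the $\lfloor k/2\rfloor\geq(1-\epsilon)k/2$ caveat). The one step where you genuinely diverge is the lower bound on the optimum of the second \MSC instance. The paper proves $|D(H)\cap N(I^*)|\geq |D(H)|\lfloor k/2\rfloor/|O|$ by contradiction: it partitions $O\setminus H$ into $\ell\leq |O|/\lfloor k/2\rfloor$ blocks of size $\lfloor k/2\rfloor$, applies submodularity to bound $|D(H)\cap N(O\setminus H)|$ by the sum over blocks, and derives that some element of $D(H)$ would have no covering partner outside $H$ --- which is exactly your ``witness'' observation, used there only to manufacture the contradiction. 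You instead make the witness $o'_x\in O\setminus H$ the centerpiece and average over a uniformly random size-$\lfloor k/2\rfloor$ subset of $O\setminus H$, getting the slightly stronger bound $|D(H)|\lfloor k/2\rfloor/|O\setminus H|$ directly by linearity of expectation. Your version is a bit cleaner (no partition, no case for the leftover block, no appeal to submodularity), while the paper's version is fully deterministic; both yield the same final constant after the $|O\setminus H|\leq|O|$ relaxation. The only points worth double-checking in your write-up are minor and both go through: $|O\setminus H|\geq\lfloor k/2\rfloor$ (from the standing assumption $k\leq|O|$), and $(1-1/e)\lfloor k/2\rfloor/|O|\leq 1$ so that the coefficient of $c(H)$ you discard is indeed nonnegative.
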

\begin{proof}
The objects in $H$ double cover $c(H)$ elements, hence, the number of elements that are single coreved by $H$ but not double covered by $H$ is $d(H) - c(H)$. The set of these elements is $D(H) = N(H) \setminus C(H)$.
We now show that $N(I)$ contains at least a fraction $\left(1-\frac{1}{e}\right)\left\lfloor\frac{k}{2}\right\rfloor\frac{1}{|O|}$ of these elements and hence the objects in $H\cup I$ double cover at least $\left(d(H) - c(H)\right)\left(1-\frac{1}{e}\right)\left\lfloor\frac{k}{2}\right\rfloor\frac{1}{|O|}$ of them.

Let us denote as $I^*$ a set of $\left\lfloor\frac{k}{2}\right\rfloor$ objects in $O\setminus H$ that maximizes the single coverage of elements in $D(H)$, that is the size of $D(H) \cap N(I^*)$ is maximum for sets of $\left\lfloor\frac{k}{2}\right\rfloor$ objects.
By contradiction, let us assume that the size of $D(H) \cap N(I^*)$ is smaller than $\left(d(H) - c(H)\right)\left\lfloor\frac{k}{2}\right\rfloor\frac{1}{|O|}$. 

Let us partition $O\setminus H$ into sets of $\left\lfloor\frac{k}{2}\right\rfloor$ objects plus a possible set of smaller size, if $|O\setminus H|$ is not divisible by $\left\lfloor\frac{k}{2}\right\rfloor$. The number of the sets in the partition is 
\[
\ell= \left\lceil \frac{|O\setminus H|}{\left\lfloor\frac{k}{2}\right\rfloor} \right\rceil = \left\lceil \frac{|O|-\left\lceil\frac{k}{2}\right\rceil}{\left\lfloor\frac{k}{2}\right\rfloor} \right\rceil \leq \frac{|O|}{\left\lfloor\frac{k}{2}\right\rfloor}.
\]
We denote the sets of this partition as $I_i$, $i=1,2,\ldots, \ell$. Since $I^*$ maximizes the single coverage of elements in  $D(H)$, then for each $I_i$, the size of $D(H) \cap N(I_i)$ is smaller than $\left(d(H) - c(H)\right)\left\lfloor\frac{k}{2}\right\rfloor\frac{1}{|O|}$. By submodularity we have
\begin{align*}
 &|D(H) \cap N(O\setminus H)| \leq \sum_{i=1}^\ell |D(H) \cap N(I_i)| \\
 &< \sum_{i=1}^\ell \left(d(H) - c(H)\right)\left\lfloor\frac{k}{2}\right\rfloor\frac{1}{|O|}\leq d(H) - c(H), 
\end{align*}
which is a contradiction because it implies that there are elements in $D(H)$ that are not covered by any object in $O\setminus H$ and hence they cannot be double covered. This proves that  the size of $D(H) \cap N(I^*)$ is at least $\left(d(H) - c(H)\right)\left\lfloor\frac{k}{2}\right\rfloor\frac{1}{|O|}$. Moreover, set $I$ approximates the optimal single coverage of $D(H)$ by a factor $1-\frac{1}{e}$ and hence the size of  $D(H)\cap N(I)$ is at least $\left(1-\frac{1}{e}\right)\left(d(H) - c(H)\right)\left\lfloor\frac{k}{2}\right\rfloor\frac{1}{|O|}$.

It follows that the overall number of elements double covered by $H\cup I$ is at least 
$
 \left(d(H) - c(H)\right)\left(1-\frac{1}{e}\right)\left\lfloor\frac{k}{2}\right\rfloor\frac{1}{|O|} + c(H).
$
By Lemma~\ref{lem:FKP}, this values is at least

\[
 \left(\frac{1}{2}\left(1-\frac{1}{e}\right)c(O^*)- c(H)\right)\left(1-\frac{1}{e}\right)\left\lfloor\frac{k}{2}\right\rfloor\frac{1}{|O|} + c(H).
\]
For any constant $\epsilon>0$ and $k$ greater than a constant, $\left\lfloor\frac{k}{2}\right\rfloor\geq (1-\epsilon)\frac{k}{2}$, then this number is at least 
\[
 (1-\epsilon)\left(\frac{1}{2}\left(1-\frac{1}{e}\right)c(O^*)- c(H)\right)\left(1-\frac{1}{e}\right)\frac{k}{2|O|} + c(H)
\]
\begin{align*}
  &= (1-\epsilon)\frac{k}{4|O|}\left(1-\frac{1}{e}\right)^2c(O^*) \\
  &\hspace{2.5cm}+ c(H)\left( 1 - (1-\epsilon)\left(1-\frac{1}{e}\right)\frac{k}{2|O|}\right)
\end{align*}
\[
 \geq (1-\epsilon)\frac{k}{4|O|}\left(1-\frac{1}{e}\right)^2c(O^*), 
\]
since $c(H)\geq 0$ and $k\leq 2|O|$.
\end{proof}

\section{Experimental study}
In this section, we study the algorithms \Greedy1 and \Greedy2 from an experimental point of view. First, we compare the solutions of the greedy algorithms with optimal solutions computed by using an integer program formulation of \MCPP\ in order to assess the real performance in terms of solution quality (see the Supplementary material for the detailed implementation of the IP formulation).
Then, we focus on the \MCI problem and compare \Greedy1 and  \Greedy2 with the natural algorithm that adds $k$ random edges. 
We execute our experiments on two popular model networks, the Barabasi-Albert (BA) network~\cite{BA99} and the Configuration Model
(CM) network~\cite{BC78,MR95}, and on real-world networks extracted from human activities\footnote{\url{http://konect.uni-koblenz.de/}}. The sizes of the networks are reported in Table~\ref{tab:networks}.
All our experiments have been performed on a computer equipped with an Intel Xeon E5-2643 CPU clocked at 3.4GHz and 128GB of main memory, and our programs have been implemented in \texttt{C++}.

\begin{table}
	\centering
	\begin{tabular}{ccc}
		\hline
		\textbf{Network} & $n=|V|$ & $m=|E|$\\
		\hline
		\texttt{BA} & 50 & 96 \\
		\texttt{CM} & 50 &  85\\
		\texttt{karate} &34  &78 \\
		\texttt{windsurfers} &43  &336  \\
		\texttt{jazz}&198&2742\\
		\texttt{haggle}&274&2899\\
		\hline
	\end{tabular}
	\caption{Undirected networks used in the experiments.}
	\label{tab:networks}
\end{table}


The results of the comparison with the optimum are reported in Figure~\ref{Fig:opt}. For each network, we randomly choose 10 target nodes and, for each target node $v$, we add $k$ 
nonexistent edges incident to $v$ for $k=1,2,\ldots,10$.
Then, we plot the average coverage centrality of the 10 target nodes for each $k$.  We observe that there is little difference between the solutions of \Greedy1  and \Greedy2 and the optimal solutions, since the approximation ratio of \Greedy1 is always greater than 0.97 while the approximation ratio of \Greedy2 is always greater than 0.78.

\begin{figure}[t]
\centering
	\begin{minipage}{0.23\textwidth}
		\centering
		\includegraphics[width=.95\linewidth]{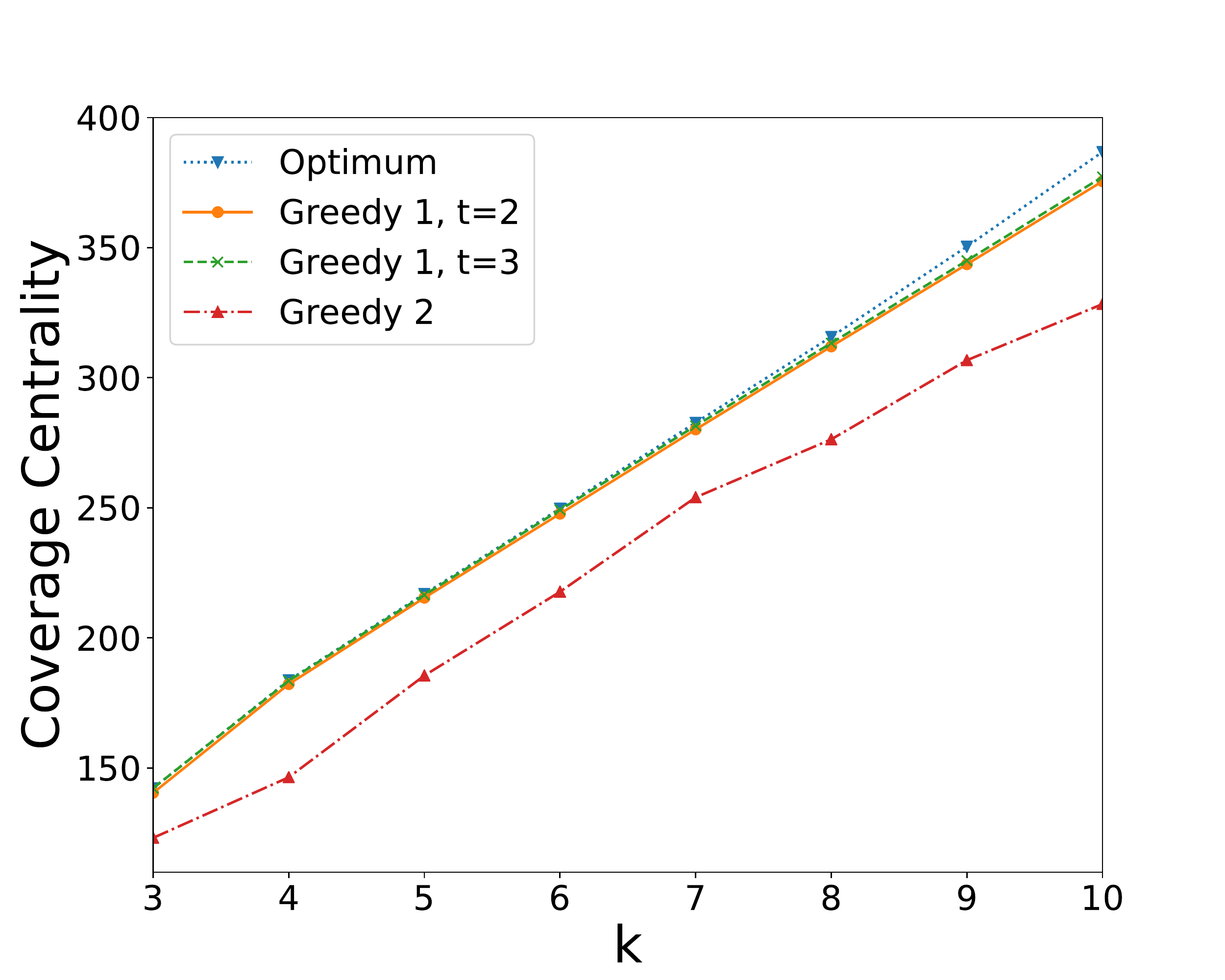}
		{\texttt{BA}}\label{Fig:opt:ba}
	\end{minipage}
	\begin{minipage}{0.23\textwidth}
		\centering
		\includegraphics[width=.95\linewidth]{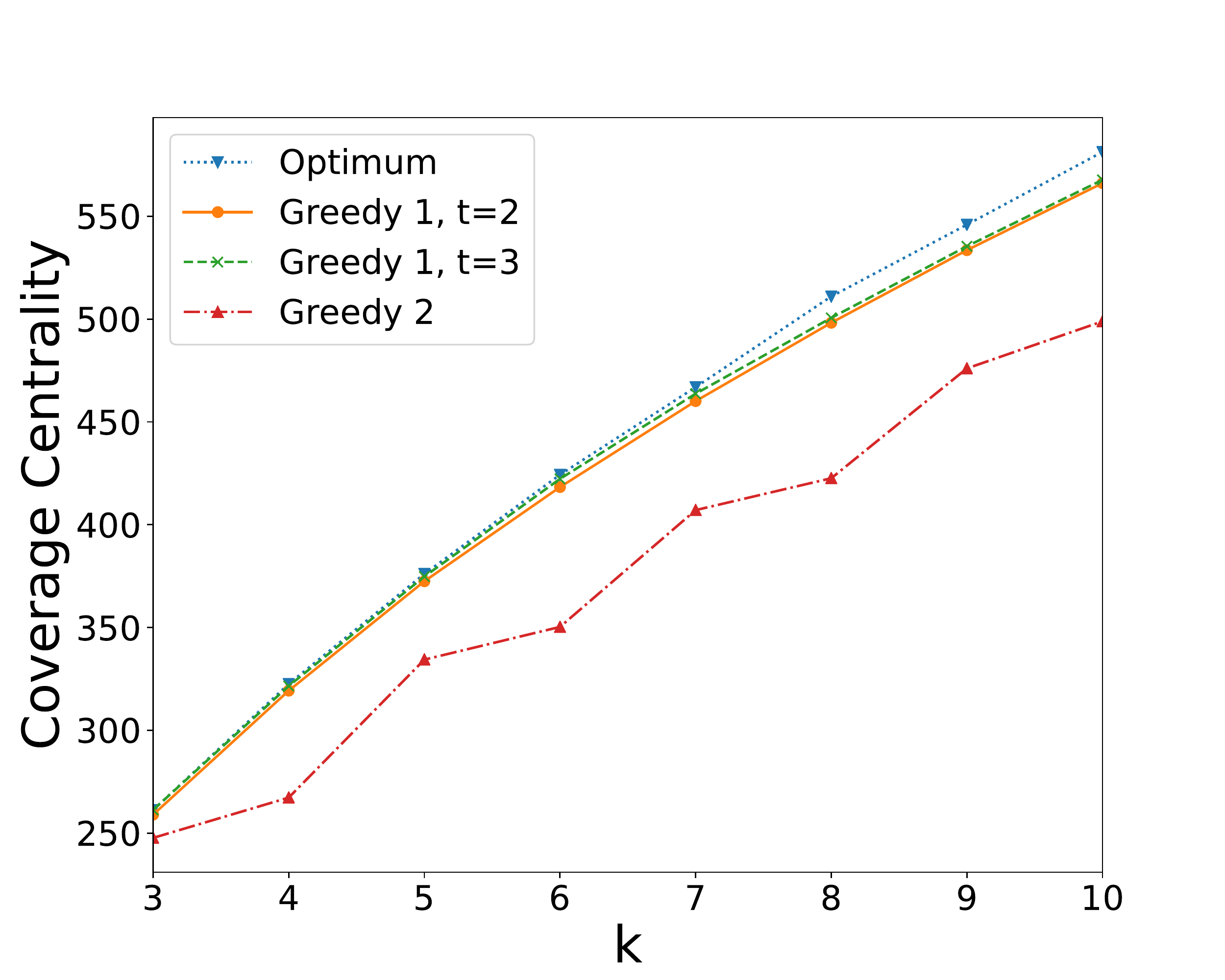}
		{\texttt{CM}}\label{Fig:opt:cm}
	\end{minipage}
\\
\begin{minipage}{0.23\textwidth}
		\centering
		\includegraphics[width=.95\linewidth]{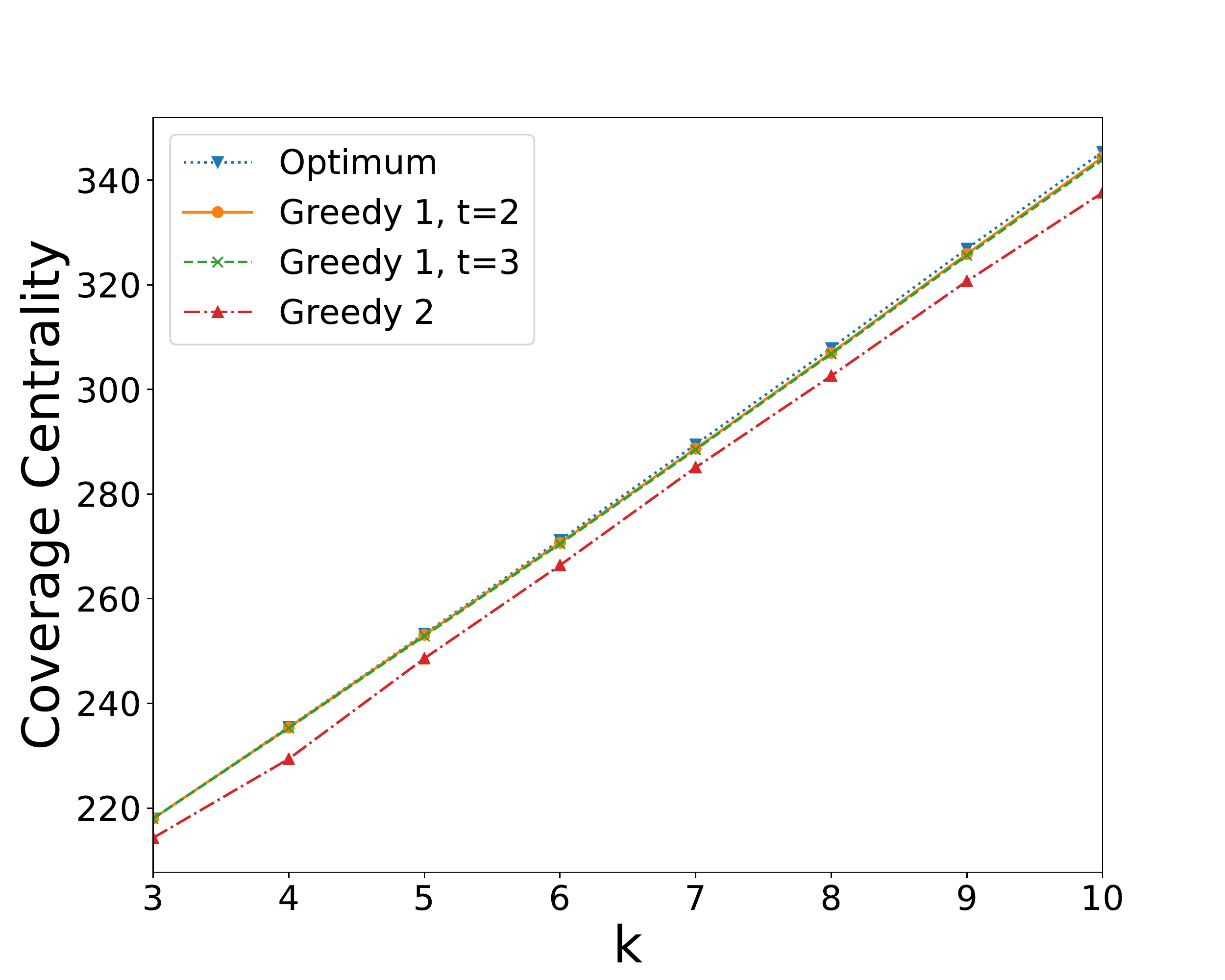}
		{\texttt{karate}}\label{Fig:opt:karate}
	\end{minipage}
	\begin{minipage}{0.23\textwidth}
		\centering
		\includegraphics[width=.95\linewidth]{{plot/ba50m2t1.csv.opt}.pdf}
		{\texttt{windsurfers}}\label{Fig:opt:windsurfers}
	\end{minipage}
	\caption{Average coverage centrality of target nodes as a function of the number $k$
		of inserted edges for \Greedy1 (with $t=2,3$), \Greedy2, and optimal solutions.}
	\label{Fig:opt}
\end{figure}
%
%
%


It is not possible to find the optimum on networks with thousand of edges in a reasonable time. Therefore, we compare the solutions with the natural baseline of adding $k$ random edges incident to each target node (the \Random algorithm). Analogously to the previous case, we plot the average coverage centrality of the 10 target nodes for each $k$. The results are reported in Figure~\ref{Fig:random:value}. We notice that also in this case \Greedy1 provides a better solution than \Greedy2. However, both algorithms perform always better than the \Random algorithm. 
On \texttt{jazz}, \Greedy1 with $t=2$ needs 7.5 seconds to solve the problem for $k=10$, \Greedy1 with $t=3$ needs 242 seconds while and \Greedy2 requires only 4.1 seconds. Notice that \Greedy2 exhibits a better scalability than \Greedy1 as $k$ increases  since it requires the same time also for $k$ greater than 10.

\begin{figure}[t!]
	\begin{minipage}{0.23\textwidth}
		\centering
		\includegraphics[width=.95\linewidth]{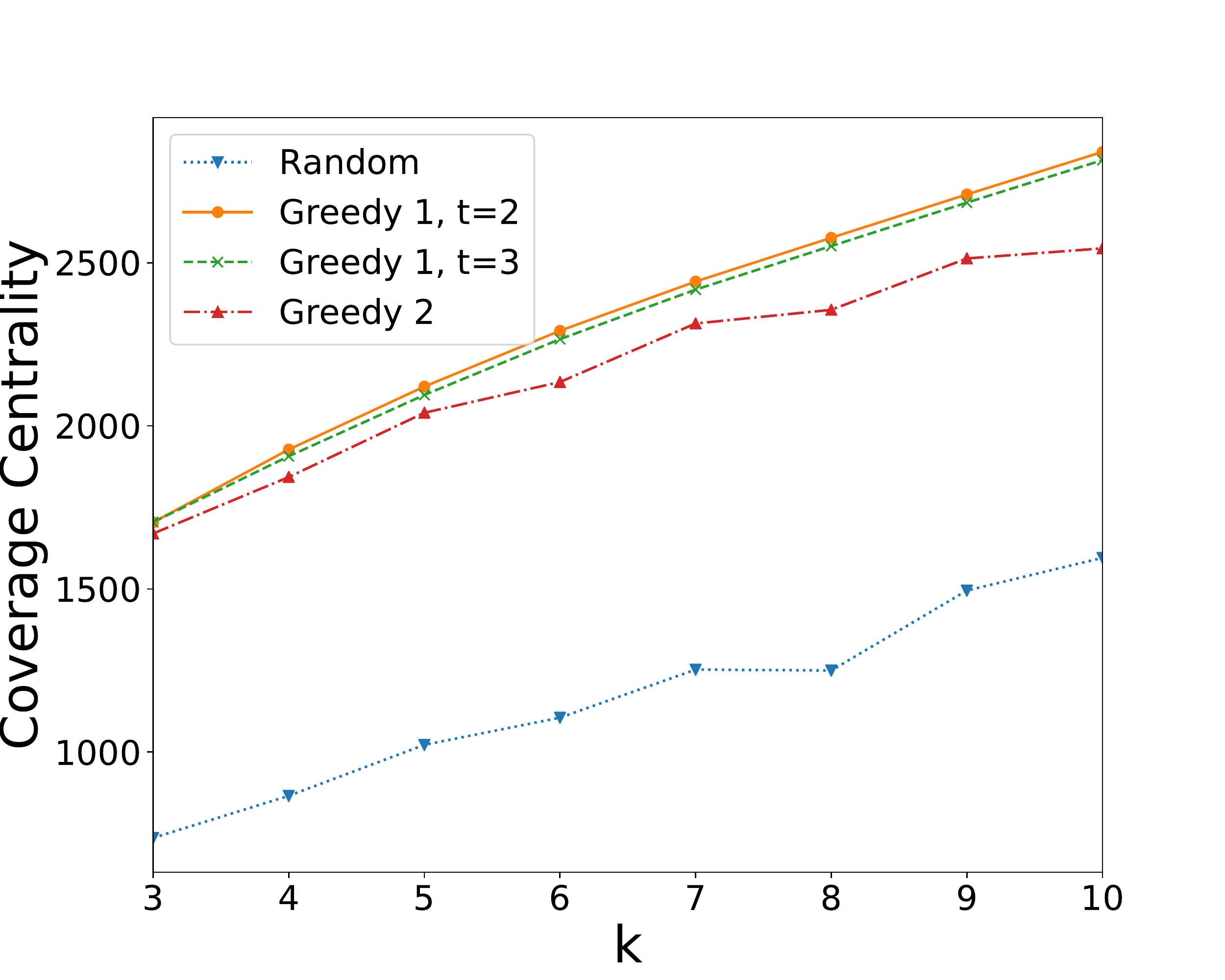}
		{\texttt{jazz} }\label{Fig:random:value:jazz}
	\end{minipage}\hfill
	\begin{minipage}{0.23\textwidth}
		\centering
		\includegraphics[width=.95\linewidth]{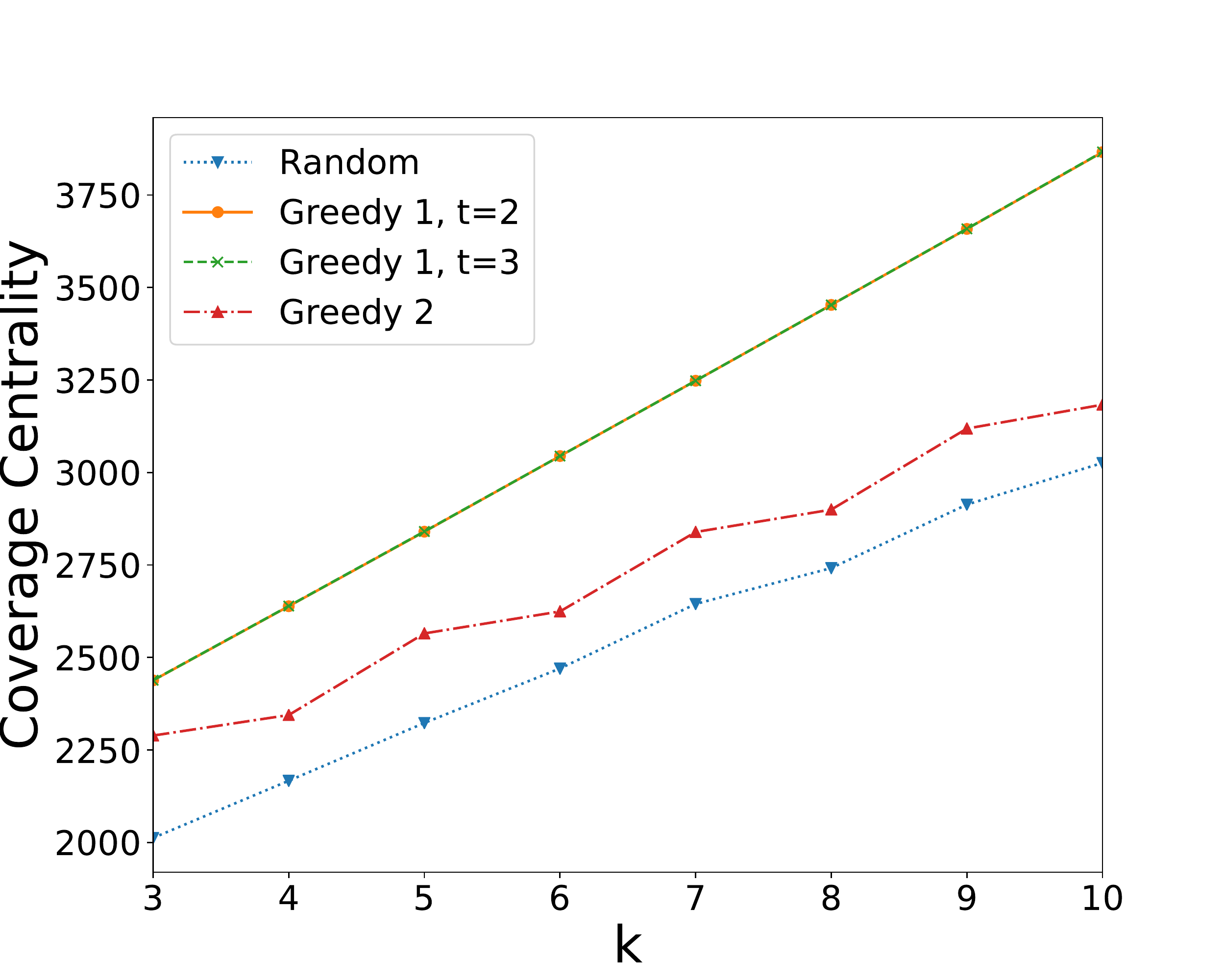}
		{\texttt{haggle }}\label{Fig:random:value:haggle}
	\end{minipage}

	\caption{Average coverage centrality of target nodes as a function of the number $k$
		of inserted edges for \Greedy1 ($t=2,3$), \Greedy 2, and \Random on \texttt{jazz} and \texttt{haggle}.}
	\label{Fig:random:value}
	\vspace{1mm}
\end{figure}

\appendix
\section{Supplementary material}
\subsection{Proof of Lemma~1}

To Prove Lemma~1, we first need to prove the following Lemma.
\setcounter{lemma}{2}
\begin{lemma}\label{lem:undirectedpairs:one}
	After each iteration $i$ of Algorithm~1, the following holds 
	\begin{equation}\label{eq:undirectedpairs:one}
	c(O_i)-c(O_{i-1}) \geq \frac{t(t-1)}{k(k-1)}( c(O^*)-c(O_{i-1})).
	\end{equation}
\end{lemma}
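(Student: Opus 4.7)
The plan is to establish this single-step marginal-gain inequality by an averaging argument over the size-$t$ subsets of the optimum $O^*$. Since Algorithm~1 chooses $Z$ maximizing $c(O_{i-1}\cup Z)-c(O_{i-1})$ subject to $|Z|\le t$, we have
\[
c(O_i)-c(O_{i-1}) \;\geq\; \max_{Z\subseteq O^*,\,|Z|=t}\bigl(c(O_{i-1}\cup Z)-c(O_{i-1})\bigr) \;\geq\; \frac{1}{\binom{k}{t}}\sum_{Z\subseteq O^*,\,|Z|=t}\bigl(c(O_{i-1}\cup Z)-c(O_{i-1})\bigr).
\]
So it suffices to lower-bound this sum by $\binom{k-2}{t-2}(c(O^*)-c(O_{i-1}))$, since $\binom{k-2}{t-2}/\binom{k}{t}=t(t-1)/(k(k-1))$.

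To lower-bound the sum, I would use a double-counting argument. Define $E=C(O^*\cup O_{i-1})\setminus C(O_{i-1})$, the set of elements newly covered when $O^*$ is added on top of $O_{i-1}$. By monotonicity of $c$ (adding more objects can only enlarge the union of covered pairs), $c(O^*\cup O_{i-1})\geq c(O^*)$, hence $|E|\geq c(O^*)-c(O_{i-1})$. I would then argue that each $e\in E$ is ``credited'' to at least $\binom{k-2}{t-2}$ of the size-$t$ subsets $Z\subseteq O^*$, meaning $e\in C(O_{i-1}\cup Z)\setminus C(O_{i-1})$ for at least that many $Z$. Swapping the order of summation then yields the required bound.

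The core step is this credit count, and it splits into cases based on how $e$ is covered in $O^*\cup O_{i-1}$. If $e$ is covered by some pair $\{a,b\}\subseteq O^*$, then $e$ is newly covered for every $Z$ containing both $a$ and $b$, of which there are exactly $\binom{k-2}{t-2}$. If instead $e$ is covered only by ``cross'' pairs $\{a,p\}$ with $a\in O^*$, $p\in O_{i-1}$ (but by no pair inside $O^*$), then fixing one such $a$, the element $e$ is newly covered for every $Z\ni a$, i.e.\ for $\binom{k-1}{t-1}\geq\binom{k-2}{t-2}$ subsets. In both cases $e$ contributes at least $\binom{k-2}{t-2}$, as needed.

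The main obstacle I anticipate is the cross-pair case: one has to be sure that the presence of such elements, which are not covered by any pair fully inside $O^*$, does not weaken the bound. The resolution is simply the inequality $\binom{k-1}{t-1}\geq\binom{k-2}{t-2}$, so the worst case is when $e$ is covered by a single pair inside $O^*$. Combining everything,
\[
\sum_{Z\subseteq O^*,\,|Z|=t}\bigl(c(O_{i-1}\cup Z)-c(O_{i-1})\bigr) \;\geq\; \binom{k-2}{t-2}\,|E| \;\geq\; \binom{k-2}{t-2}\bigl(c(O^*)-c(O_{i-1})\bigr),
\]
which, divided by $\binom{k}{t}$, yields precisely~(2).
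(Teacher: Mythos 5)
Your proof is correct and follows essentially the same route as the paper's: an averaging argument over the size-$t$ subsets of $O^*$, crediting each element not covered by $O_{i-1}$ to at least $\binom{k-2}{t-2}$ such subsets via a covering pair, and then using $\binom{k-2}{t-2}/\binom{k}{t}=\frac{t(t-1)}{k(k-1)}$. The only (harmless) difference is that you count the larger set $C(O^*\cup O_{i-1})\setminus C(O_{i-1})$ and therefore need the extra ``cross-pair'' case, whereas the paper restricts to elements covered by a pair inside $O^*$, for which $|C(O^*)\setminus C(O_{i-1})|\geq c(O^*)-c(O_{i-1})$ already suffices.
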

\begin{proof}
	For any solution $O'$ and element $x$, let us denote as $c_x(O')$ the indicator function on whether element $x$ is covered by $O'$, that is:
	\[
	c_x(O') = \left\{
	\begin{array}{ll}
	1 & \mbox{if $\exists~ i,j\in O'$ such that $x\in C(i,j)$}\\
	0 & \mbox{otherwise.}
	\end{array}
	\right.
	\]
	Therefore, the value of a solution $O'$ for \MCPP is $c(O') = \sum_{x\in X} c_x(O')$.
	
	Let $Y_t$ denote the collection of all the subsets of $O^*$ of cardinality $t$ that contains at least one element that is not a member of $O_{i-1}$: $Y_t = \{ Z \subseteq O^*: |Z| = t \wedge Z \setminus O_{i-1} \neq \emptyset \}$. We claim that the following inequality holds:
	\[
	{k-2 \choose t-2}(c(O^*)-c(O_{i-1})) \leq \sum_{Z \in Y_t} (c(O_{i-1} \cup Z) - c(O_{i-1})).
	\]
	To see that it holds, we focus on the contribution to both sides of the inequality for each element $x$ of $X$ and show the following:
	\[
	{k-2 \choose t-2}(c_x(O^*)-c_x(O_{i-1})) \leq \sum_{Z \in Y_t} (c_x(O_{i-1} \cup Z) - c_x(O_{i-1})).
	\]
	The right hand side cannot be negative so we only have to consider elements for which the left hand side is strictly positive. In other words, we consider an element $x$ in $X$ that is covered by $O^*$ but is not covered by $O_{i-1}$.
	For such an element $x$, there must be a pair $\{j,l\}$ in $Y_2$ that covers $x$, that is $x\in C(j,l)$.
	There is at least ${k-2 \choose t-2}$ members of $Y_t$ that contain $j$ and $l$ and $x$ contributes with $1$ to all the terms on the right hand side corresponding to these sets. This shows that the inequality holds.
	
	On the other hand, any subset $Z \in Y_t$ is analyzed at iteration $i$ of Algorithm~1, and a set that maximizes the difference with respect to the previous solutions is selected. This implies that 
	$c(O_{i-1} \cup Z) - c(O_{i-1}) \leq c(O_{i-1} \cup Z_i) - c(O_{i-1}) = c(O_i)-c(O_{i-1})$,
	where $Z_i$ is the set of nodes selected at iteration $i$. Since there are at most ${k \choose t}$ sets in $Y_t$, we have
	\[
	\sum_{Z \in Y_t}(c(O_{i-1} \cup Z) - c(O_{i-1})) \leq {k \choose t}( c(O_i)-c(O_{i-1})). 
	\]
	Finally, we use the fact that ${k-2 \choose t-2}/{k \choose t}=\frac{t(t-1)}{k(k-1)}$.
\end{proof}

We are now ready to prove Lemma~1.
\setcounter{lemma}{0}

\begin{lemma}\label{lem:undirectedpairs:two}
	After each iteration $i$ of Algorithm~1, the following holds 
	\begin{equation}\label{eq:undirectedpairs:two}
	c(O_i) \geq \left(1-\left(1-\frac{t(t-1)}{k(k-1)}\right)^i \right) c(O^*).
	\end{equation}
\end{lemma}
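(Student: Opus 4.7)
The plan is to prove the bound by induction on $i$, using the single-step inequality of Lemma~\ref{lem:undirectedpairs:one} (the previously established lemma bounding $c(O_i)-c(O_{i-1})$ below by $\frac{t(t-1)}{k(k-1)}(c(O^*)-c(O_{i-1}))$) as the engine of the recurrence.

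For the base case $i=0$, note that $O_0=\emptyset$, so $c(O_0)=0$ while $1-(1-\alpha)^0=0$ with $\alpha:=\frac{t(t-1)}{k(k-1)}$, so the bound holds trivially. For the inductive step, rearrange the conclusion of Lemma~\ref{lem:undirectedpairs:one} into the form
$$c(O_i) \;\geq\; (1-\alpha)\,c(O_{i-1}) + \alpha\, c(O^*).$$
Plugging in the inductive hypothesis $c(O_{i-1}) \geq (1-(1-\alpha)^{i-1})c(O^*)$ and simplifying gives
$$c(O_i) \;\geq\; \bigl[(1-\alpha) - (1-\alpha)^i + \alpha\bigr] c(O^*) \;=\; \bigl[1-(1-\alpha)^i\bigr] c(O^*),$$
which is exactly the desired inequality. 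Substituting back $\alpha = \frac{t(t-1)}{k(k-1)}$ yields \eqref{eq:undirectedpairs:two}.

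The proof is purely a routine induction once Lemma~\ref{lem:undirectedpairs:one} is available, so there is no real obstacle here: all the combinatorial content (the averaging over $t$-subsets of $O^*$ and the use of $\binom{k-2}{t-2}/\binom{k}{t}=\frac{t(t-1)}{k(k-1)}$) lives in the previous lemma, and this lemma simply iterates that single-step gain bound. The only thing to be careful about is to rewrite the marginal-gain inequality in the multiplicative-contraction form $c(O_i)\geq(1-\alpha)c(O_{i-1})+\alpha c(O^*)$ before plugging into the induction hypothesis, which is the standard Nemhauser--Wolsey--Fisher-style manipulation.
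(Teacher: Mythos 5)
Your proof is correct and follows essentially the same route as the paper: both arguments do induction on $i$ and rewrite the single-step gain bound of Lemma~\ref{lem:undirectedpairs:one} in the contraction form $c(O_i)\geq(1-\alpha)c(O_{i-1})+\alpha c(O^*)$ before applying the inductive hypothesis. The only cosmetic difference is that you anchor the induction at $i=0$ with $c(\emptyset)=0$ while the paper starts at $i=1$; both are fine since $\alpha=\frac{t(t-1)}{k(k-1)}\in[0,1]$ for $t\leq k$.
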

\begin{proof}
	By Lemma~\ref{lem:undirectedpairs:one}, the statement holds after the first iteration as it implies that $c(O_{1}) \geq \frac{t(t-1)}{k(k-1)}c(O^*).$ By induction, let us assume that the statement holds at iteration $i-1$, for $i\geq 2$. The following holds:
	\begin{align*}
	c(O_i) & =    c(O_{i-1}) + ( c(O_i) - c(O_{i-1}) )\\
	&\geq  c(O_{i-1}) + \frac{t(t-1)}{k(k-1)}( c(O^*)-c(O_{i-1}))\\
	& =    \left(1-\frac{t(t-1)}{k(k-1)}\right)c(O_{i-1}) + \frac{t(t-1)}{k(k-1)}c(O^*) \\
	&\geq  \left(1-\frac{t(t-1)}{k(k-1)}\right)\left(1-\left(1-\frac{t(t-1)}{k(k-1)}\right)^{i-1} \right) c(O^*)\\ &\hspace{4.5cm}+ \frac{t(t-1)}{k(k-1)}c(O^*) \\
	& = \left(1-\left(1-\frac{t(t-1)}{k(k-1)}\right)^i \right) c(O^*),
	\end{align*}
	where the first inequality is due to Lemma~\ref{lem:undirectedpairs:one} and the second one is due to the inductive hypothesis.
\end{proof}

\subsection{Proof of Lemma~2}

\setcounter{lemma}{1}
\begin{lemma}\label{lem:FKP}
	If $H$ is the set of objects selected at line~2 of Algorithm~2 and $O^*$ is an optimal solution for \MCPP, then
	\[
	d(H) \geq \frac{1}{2}\left(1-\frac{1}{e}\right)c(O^*).
	\]
\end{lemma}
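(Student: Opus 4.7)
The plan is to combine two ingredients: (i) the classical $(1-1/e)$ guarantee of the greedy algorithm for \MSC, which at line~2 is applied to the instance whose sets are the neighborhoods $N(o)$ for $o\in O$, and (ii) a structural bound showing that some subset of $O^*$ of size at most $\lceil k/2\rceil$ already has degree at least $\tfrac12 c(O^*)$. Combining (i) and (ii) immediately yields $d(H)\geq (1-1/e)\cdot\tfrac12 c(O^*)$, which is the claim.

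For ingredient (ii), I would partition the optimal solution $O^*$ (with $|O^*|=k$) into two disjoint halves $O^*_1$ and $O^*_2$ with $|O^*_1|=\lceil k/2\rceil$ and $|O^*_2|=\lfloor k/2\rfloor$. The key observation is that every element $x$ double-covered by $O^*$ is single-covered by at least one of the two halves: if $x\in C(i,j)$ for some pair $\{i,j\}\subseteq O^*$, then $i$ falls in $O^*_1$ or in $O^*_2$ (and so does $j$), and in either case $x\in N(O^*_1)\cup N(O^*_2)$ because $x\in C(i,j)\subseteq N(\{i\})\cap N(\{j\})$ and both $\{i\}$ and $\{j\}$ meet $O^*_1\cup O^*_2$. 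Hence
\[
 c(O^*) \;=\; |C(O^*)| \;\leq\; |N(O^*_1)\cup N(O^*_2)| \;\leq\; d(O^*_1) + d(O^*_2),
\]
so at least one of $d(O^*_1),d(O^*_2)$ is $\geq c(O^*)/2$. Call that half $\tilde H$; it has size at most $\lceil k/2\rceil$, so (padding it arbitrarily up to exactly $\lceil k/2\rceil$ objects if needed, which can only increase $d$ by monotonicity) there is a feasible set of $\lceil k/2\rceil$ objects with degree at least $c(O^*)/2$.

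For ingredient (i), the \MSC instance constructed at line~1 has the property that the single-coverage value of a collection of objects $O'$ is exactly $d(O')=|N(O')|$ (this is the definition of $N$). Since $d$ is the union of sets and hence a monotone submodular function, the Nemhauser--Wolsey--Fisher greedy bound gives $d(H)\geq (1-1/e)\cdot d(H^\star)$, where $H^\star$ is an optimal $\lceil k/2\rceil$-subset maximizing $d$. By (ii), $d(H^\star)\geq c(O^*)/2$, and chaining the two inequalities gives the stated bound.

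The only subtlety I foresee is verifying the partitioning step cleanly, in particular handling pairs $\{i,j\}$ with both endpoints inside a single half; the argument above sidesteps this because we only need $x$ to be in the \emph{single} coverage $N(O^*_1)\cup N(O^*_2)$, and any pair covering $x$ forces at least one of its endpoints into at least one of the two halves. No probabilistic averaging or deeper submodularity machinery should be required — the whole proof is a couple of lines once the partition argument is in place.
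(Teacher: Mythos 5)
Your proof is correct and follows essentially the same route as the paper's: the greedy $(1-1/e)$ guarantee for the degree-maximization (\MSC) instance at line~2, combined with a halving argument showing that some set of at most $\lceil k/2\rceil$ objects already has degree at least $\tfrac{1}{2} c(O^*)$. The only cosmetic difference is that you split the \MCPP optimum $O^*$ directly and justify $c(O^*)\leq d(O^*_1)+d(O^*_2)$ element by element, whereas the paper first passes to a degree-maximizing $k$-set $O_N^*$ and splits that using submodularity and monotonicity of $d$; both arguments reduce to $d(O^*)\geq c(O^*)$ plus subadditivity of the union defining $N(\cdot)$.
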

\begin{proof}
	Let $H^*$ be a set of $\left\lceil\frac{k}{2}\right\rceil$ objects with maximum degree and $O_N^*$ a set of $k$ objects with maximum degree. By the approximation bound of the greedy algorithm at line~2 of Algorithm~2, it follows that
	\[
	d(H) \geq \left(1-\frac{1}{e}\right)d(H^*).
	\]
	Since $O_N^*$ and $O^*$ have the same cardinality, we have
	\[
	d(O_N^*) \geq d(O^*) \geq c(O^*).
	\]
	Let $O_N^2$ be a subset of $O_N^*$ with cardinality $\left\lceil\frac{k}{2}\right\rceil$ that maximizes $d(O_N^2)$.
	Since $O_N^2$ and $H^*$ have the same cardinality, then $d(H^*)\geq d(O_N^2)$.
	By submodularity and monotonicity of the degree function and by definition of $O_N^2$, we have
	\[
	d(O_N^*) \leq d(O_N^2) + d(O_N^*\setminus O_N^2) \leq 2 d(O_N^2),
	\]
	and then
	\[
	d(O_N^2) \geq \frac{1}{2} d(O_N^*).
	\]
	Summarizing
	\begin{align*}
	d(H)& \geq \left(1-\frac{1}{e}\right)d(H^*)  \geq \left(1-\frac{1}{e}\right)d(O_N^2)\\
	&\geq \frac{1}{2} \left(1-\frac{1}{e}\right)d(O_N^*) \geq  \frac{1}{2} \left(1-\frac{1}{e}\right) c(O^*).\qedhere
	\end{align*}
\end{proof}

\subsection{Integer program for \MCPP}
\begin{equation*}
\begin{array}{ll@{}l}
\text{Maximize}  & \displaystyle\sum_{\substack{u,w\in O\\x \in C(u,w)}} y_{xuw} &\\
\text{subject to}& \displaystyle\sum_{u,w: x \in C(u,w)}y_{xuw}\leq 1, & \mbox{ for } x \in X\\
& \displaystyle 2y_{xuw}\leq z_u + z_w,& \mbox{ for } u,w\in O, x \in C(u,w)\label{lp:mcpp}\\
& \displaystyle\sum_{u\in O} z_u \leq k,\\
& \displaystyle z_u\in\{0,1\} ,&\mbox{ for }u\in O,\\
& \displaystyle y_{xuw}\in\{0,1\} ,&\mbox{ for }u,w\in O, x\in C(u,w).
\end{array}
\end{equation*}
The binary decision variables $z_u$ and $y_{xuw}$ specify a solution $S$ of the \MCPP\ problem as follows: For any $u\in O$,
$$z_u=\begin{cases}
1&\mbox{if } u\in S,\\
0 &\mbox{otherwise,}
\end{cases}$$
and, for $u,w\in O, x \in C(u,w)$, 
$$y_{xuw} =  \begin{cases}
1&\mbox{if } \{u,w\} \mbox{ covers } x,\\
0 &\mbox{otherwise.}
\end{cases}  $$
The first constraint of the integer program ensures that every element can be covered by at most one pair $\{u,w\}$ of objects and, hence, that a covered element is counted only once in the objective function, while the second constraint ensures that an element only can be covered by certain pairs of objects.

We solve the above integer program using the Gurobi~\footnote{\url{http://www.gurobi.com/}} solver.

\newpage
\bibliography{references}
\bibliographystyle{aaai}

\end{document}